\newtheorem{theorem}{Theorem}
\newtheorem{definition}[theorem]{Definition}
\newtheorem{lemma}[theorem]{Lemma}
\newcommand{\eq}[1]{\hyperref[eq:#1]{(\ref*{eq:#1})}}
\renewcommand{\sec}[1]{\texorpdfstring{\hyperref[sec:#1]{Section~\ref*{sec:#1}}}{Section~\ref*{sec:#1}}}
\newcommand{\thm}[1]{\texorpdfstring{\hyperref[thm:#1]{Theorem~\ref*{thm:#1}}}{Theorem~\ref*{thm:#1}}}
\newcommand{\lem}[1]{\hyperref[lem:#1]{Lemma~\ref*{lem:#1}}}
\newcommand{\cor}[1]{\hyperref[cor:#1]{Corollary~\ref*{cor:#1}}}
\newcommand{\fig}[1]{\hyperref[fig:#1]{Figure~\ref*{fig:#1}}}
\newcommand{\apd}[1]{\hyperref[apd:#1]{Appendix~\ref*{apd:#1}}}
\newcommand{\beq}{\begin{equation}}
\newcommand{\eeq}{\end{equation}}
\newcommand{\ba}{\begin{array}}
\newcommand{\ea}{\end{array}}
\newcommand{\bal}{\begin{align}}
\newcommand{\eal}{\end{align}}
\newcommand{\bc}{\begin{cases}}
\newcommand{\ec}{\end{cases}}
\newcommand{\bpm}{\begin{pmatrix}}
\newcommand{\epm}{\end{pmatrix}}
\newcommand{\ben}{\begin{enumerate}}
\newcommand{\een}{\end{enumerate}}
\newcommand{\bit}{\begin{itemize}}
\newcommand{\eit}{\end{itemize}}
\newcommand{\tsf}{\textsf}
\newcommand{\bbC}{\mathbb{C}}
\newcommand{\bbR}{\mathbb{R}}
\newcommand{\bbZ}{\mathbb{Z}}
\renewcommand{\Re}{\mathrm{Re}}
\let\originalleft\left
\let\originalright\right
\renewcommand{\left}{\mathopen{}\mathclose\bgroup\originalleft}
\renewcommand{\right}{\aftergroup\egroup\originalright}
\newcommand{\lb}{\left(}
\newcommand{\rb}{\right)}
\newcommand{\defeq}{\vcentcolon=}
\newcommand{\abs}[1]{\left | #1 \right |}
\newcommand{\dsum}{\displaystyle\sum\limits}
\newcommand{\diag}[1]{\operatorname{diag}\lb #1 \rb}
\newcommand{\nm}[1]{\left \| #1  \right \|}
\newcommand{\bnm}[1]{\| #1 \|}
\newcommand{\poly}[1]{\operatorname{poly}\lb #1 \rb}
\newcommand{\bgo}[1]{O \lb #1 \rb}
\newcommand{\omg}[1]{\Omega\lb #1 \rb}
\renewcommand{\log}[1]{\operatorname{log} \lb #1 \rb}
\newcommand{\loglog}[1]{\operatorname{log}\operatorname{log} \lb #1 \rb}
\renewcommand{\exp}[1]{\operatorname{exp} \lb #1 \rb}
\newcommand{\ket}[1]{\left | #1\right\rangle}
\newcommand{\bra}[1]{\left\langle #1\right|}
\newcommand{\braket}[2]{\left\langle #1|#2 \right\rangle}
\newcommand{\ketbra}[2]{\left|#1\rangle \langle #2  \right |}
\newcommand{\PSPACE}{\tsf{PSPACE}}
\newcommand{\PP}{\tsf{PP}}
\newcommand{\BQP}{\tsf{BQP}}
\renewcommand{\phi}{\varphi}
\newcommand{\inn}{\mathrm{in}}
\begin{document}

\title{Quantum algorithm for linear differential equations with exponentially improved dependence on precision}

\author{Dominic W.\ Berry \thanks{Department of Physics and Astronomy, Macquarie University}
\hspace{1em}
Andrew M.\ Childs \thanks{Department of Computer Science and Institute for Advanced Computer Studies, University of Maryland}$^{\,\,\,,}$\thanks{Joint Center for Quantum Information and Computer Science, University of Maryland}
\hspace{1em}
Aaron Ostrander $^{\ddag,}$\thanks{Department of Physics, University of Maryland}
\hspace{1em}
Guoming Wang $^{\ddag}$
}

\date{}

\maketitle

%%%%%%%%%%%%%%%%%%%%%%%%%%%%%%%%%%%%%%%%%%%%%%%%%%%%%%%%%%%%%%%%%%%%%%%%%%%%%%

\begin{abstract}
We present a quantum algorithm for systems of (possibly inhomogeneous) linear ordinary differential equations with constant coefficients. The algorithm produces a quantum state that is proportional to the solution at a desired final time. The complexity of the algorithm is polynomial in the logarithm of the inverse error, an exponential improvement over previous quantum algorithms for this problem. Our result builds upon recent advances in quantum linear systems algorithms by encoding the simulation into a sparse, well-conditioned linear system that approximates evolution according to the propagator using a Taylor series. Unlike with finite difference methods, our approach does not require additional hypotheses to ensure numerical stability. 
\end{abstract}

%%%%%%%%%%%%%%%%%%%%%%%%%%%%%%%%%%%%%%%%%%%%%%%%%%%%%%%%%%%%%%%%%%%%%%%%%%%%%%
%%%%%%%%%%%%%%%%%%%%%%%%%%%%%%%%%%%%%%%%%%%%%%%%%%%%%%%%%%%%%%%%%%%%%%%%%%%%%%
\section{Introduction}
\label{intro}

One of the original motivations for developing a quantum computer was to efficiently simulate Hamiltonian dynamics, i.e., differential equations of the form $\frac{d \vec x}{dt} = A \vec x$ where $A$ is anti-Hermitian. Given a suitable description of $A$, a copy of the initial quantum state $\ket{x(0)}$, and an evolution time $T$, the goal is to produce a quantum state that is $\epsilon$-close to the final state $\ket{x(T)}$. The first algorithms for this problem had complexity polynomial in $1/\epsilon$ \cite{LLo96,AT03,Chi04,BACS07}. Subsequent work gave an algorithm with complexity $\poly{\log{1/\epsilon}}$---an exponential improvement---which is optimal in a black-box model \cite{berry2014exponential}. More recent work has streamlined these algorithms and improved their dependence on other parameters \cite{berry2015simulating,berry2015hamiltonian,low2016optimal,berry2016corrected,low2016hamiltonian,NB16}.

While Hamiltonian simulation has been a focus of quantum algorithms research, the more general problem of simulating linear differential equations of the form $ \frac{d \vec x}{dt} = A \vec x + \vec b$ for arbitrary (sparse) $A$ is less well studied. Reference~\cite{berry2014high} solves this problem using a quantum linear systems algorithm (QLSA) to implement linear multistep methods, which represent the differential equations with a system of linear equations by discretizing time.  The complexity of this approach is $\poly{1 / \epsilon}$.  Considering the recent improvements to the complexity of Hamiltonian simulation, it is natural to ask whether linear differential equations can be solved more efficiently as a function of $\epsilon$.

Hamiltonian simulation is a central component of the QLSA, and the techniques underlying $\poly{\log{1/\epsilon}}$ Hamiltonian simulation have been adapted to give a QLSA with complexity $\poly{\log{1/\epsilon}}$ \cite{childs2015quantum}.  However, even if this improved QLSA is used to implement the algorithm of Ref.~\cite{berry2014high}, the overall complexity is still $\poly{1 / \epsilon}$, since the multistep method itself is a significant source of error.

In a similar vein, the QLSA of Ref.~\cite{harrow2009quantum} has $\poly{1 / \epsilon}$ complexity even when using a Hamiltonian simulation algorithm with $\poly{\log{1/ \epsilon }}$ complexity, simply because phase estimation has complexity $\poly{1 / \epsilon}$. 
Reference~\cite{childs2015quantum} provides a QLSA with $\poly{\log{1/ \epsilon }}$ complexity by avoiding phase estimation and instead directly inverting the linear system using a linear combination of unitaries (LCU). Thus, one might consider realizing the solution of $\frac{d \vec x}{dt}=A \vec x + \vec b$ as a linear combination of unitaries. Unfortunately, in the general case where $A$ is not anti-Hermitian, the best implementation of this approach that we are aware of has an exponentially small success probability.

In this paper, we circumvent these limitations and present a quantum algorithm for linear differential equations with complexity $\poly{\log{1/\epsilon}}$, an exponential improvement over Ref.~\cite{berry2014high}. As in Ref.~\cite{berry2014high}, our approach applies the QLSA.  However, instead of using a linear multistep method, we encode a truncation of the Taylor series of $\exp{At}$, the propagator for the differential equation, into a linear system.  Since it effectively implements a linear combination of operations, our approach is conceptually similar to quantum simulation via linear combinations of unitaries, but we achieve significantly better performance by constructing this linear combination stepwise through a system of linear equations.  This alternative to direct application of LCU methods might be advantageous for other quantum algorithms.

In addition to scaling well with the simulation error, our algorithm has favorable performance as a function of other parameters.  The complexity is nearly linear in the evolution time, which is a quadratic improvement over Ref.~\cite{berry2014high} and is nearly optimal \cite{BACS07}.  The complexity is also nearly linear in the sparsity of $A$ and in a parameter characterizing the decay of the solution vector.  The latter dependence is necessary since producing a normalized version of a subnormalized solution vector is equivalent to postselection, which is computationally intractable \cite{aaronson2005quantum}, as discussed further in \sec{discussion}.  Along similar lines, we assume that the eigenvalues of $A$ have non-positive real part since it is intractable to simulate exponentially growing solutions.  (This improves upon Ref.~\cite{berry2014high}, where the eigenvalues $\lambda$ of $A$ must satisfy $|\arg(-\lambda)|\le\alpha$ for some constant $\alpha$ depending on the stability of the multistep method.)  For a precise statement of the main result, see \thm{main}.

This paper is organized as follows. \sec{ls} describes how we encode the solution of a system of differential equations into a system of linear equations.  The following three sections analyze properties of this system: \sec{cond} bounds its condition number, \sec{solerr} analyzes how well it approximates the differential equation, and \sec{succprob} shows that a measurement of its solution vector provides a solution of the differential equation with appreciable probability. In \sec{stateprep} we explain how to prepare the state that is input to the QLSA using black boxes for the initial condition and inhomogeneous term of the differential equation. We formally state and prove our main result in \sec{main}. Finally, we conclude in \sec{discussion} with a discussion of the result and some open problems.

%%%%%%%%%%%%%%%%%%%%%%%%%%%%%%%%%%%%%%%%%%%%%%%%%%%%%%%%%%%%%%%%%%%%%%%%%%%%%%
%%%%%%%%%%%%%%%%%%%%%%%%%%%%%%%%%%%%%%%%%%%%%%%%%%%%%%%%%%%%%%%%%%%%%%%%%%%%%%
\section{Constructing the Linear System}
\label{sec:ls}

As in Ref.~\cite{berry2014high} we consider a differential equation of the form
\begin{equation}
\displaystyle\frac{d \vec x}{dt} = A \vec x + \vec b
\label{eq:dfeq}
\end{equation}
where $A$ and $\vec b$ are time-independent. This has the exact solution  
\beq
\vec x(t) = \exp{At} \vec x(0) +  (\exp{At} - I) A^{-1}\vec b.
\label{eq:dfeqsol}
\eeq

Define
\beq
  T_k(z)
  \defeq \dsum_{j=0}^k \dfrac{z^j}{j!}
  \approx \exp{z}
\label{eq:Tk}
\eeq
and
\beq
  S_{k}(z)
  \defeq \displaystyle\sum_{j=1}^{k} \frac{z^{j-1}}{j!}
  \approx (\exp{z}-1)z^{-1}
\label{eq:Sk}
\eeq
where the approximations hold for large $k$. Then for short evolution time $h$ (namely $h \le 1/\nm{A}$, where $\nm{\cdot}$ denotes the spectral norm) and large $k$, we can approximate the solution by 
\beq
\vec x (h) \approx T_k (Ah) \vec x(0) +  S_k(Ah) h \vec b.
\eeq
This approximate solution can be used in turn as an initial condition for another step of evolution, and we can repeat this procedure as desired for a total number of steps $m$.

We encode this procedure in a linear system using the following family of matrices.

\begin{definition}
Let $A$ be an $N \times N$ matrix, and let $m, k, p \in \bbZ^+$. Define 
\begin{align}
C_{m,k,p}(A) &\defeq \dsum_{j=0}^{d} \ketbra{j}{j} \otimes I  -\dsum_{i=0}^{m-1} \dsum_{j=1}^{k} \ketbra{i(k+1)+j}{i(k+1)+j-1} \otimes A/j \nonumber\\*
&\quad -\dsum_{i=0}^{m-1}\dsum_{j=0}^{k} \ketbra{(i+1)(k+1)}{i(k+1)+j} \otimes I -\dsum_{j=d-p+1}^{d} \ketbra{j}{j-1} \otimes I,
\end{align}
where $d \defeq m(k+1)+p$, and $I$ is the $N \times N$ identity matrix.
\end{definition}

Now consider the linear system
\beq
C_{m, k, p} (Ah) \ket{x} = \ket{0} \ket{x_{\inn}} + h \displaystyle\sum_{i=0}^{m-1} | i(k+1) +1 \rangle  \ket{ b},
\label{eq:ls}
\eeq
where $\ket{ x_{\inn}}, \ket{b} \in \bbC^N$ and $h \in \bbR^+$. The first register labels a natural block structure for $C_{m,k,p}$. For example, the system $C_{2,3,2}(Ah) | x \rangle = \ket{0} \ket{x_{\inn}} + h \sum_{i=0}^{1} | 4i +1 \rangle  \ket{ b} $ is as follows:
\beq
C_{2,3,2}(Ah) | x \rangle = \bpm
I & & & & & & & & & &\\
-Ah & I & & & & & & & & &\\
& -{Ah}/{2} & I & & & & & & & &\\
& & -{Ah}/{3} & I & & & & & & &\\
-I & -I & -I & -I  & I & & & & & &\\
& & & & -Ah & I & & & & &\\
& & & & & -Ah/2 & I & & & &\\
& & & & & & -Ah/3 & I &  & &\\
& & & & -I & -I & -I & -I & I & & \\
& & & &  & & & &  -I & I &\\
& & & &  & & & &  & -I & I
\epm
| x \rangle = 
\bpm
| x_{\inn} \rangle \\
h | b \rangle \\
0 \\
0\\
0\\
h | b \rangle \\
0 \\
0 \\
0 \\
0 \\
0 \\
\epm
.
\eeq
After performing $m$ steps of the evolution approximated with a Taylor series of order $k$, the solution is kept constant for $p$ steps.
This ensures a significant probability of obtaining the solution at the final time, similarly as in Ref.~\cite{berry2014high}.
Note that $C_{m,k,p}(Ah)$ is nonsingular, since it is a lower-triangular matrix with nonzero diagonal entries. If $N=1$, then $C_{m,k,p}(Ah)$ is a $(d+1) \times (d+1)$ matrix.

The solution of Eq.~(\ref{eq:ls}) is
\beq
\ket{x}=C_{m,k,p}(Ah)^{-1} \left[ \ket{0}\ket{x_{\inn}} + h \displaystyle\sum_{i=0}^{m-1} | i(k+1) +1 \rangle  \ket{ b} \right], 
\label{eq:lssol1}
\eeq
which can be written as
\beq
\ket{x} =\dsum_{i=0}^{m-1} \sum\limits_{j=0}^{k} \ket{i(k+1)+j} \ket{x_{i,j}} + \dsum_{j=0}^{p} \ket{m(k+1)+j} \ket{x_{m,j}} \label{eq:lssol2}
\eeq
for some $\ket{x_{i,j}} \in \bbC^N$. By the definition of $C_{m,k,p}(Ah)$, these $\ket{x_{i,j}}$s satisfy
\begin{alignat}{2}
\ket{x_{0, 0}} &=  \ket{x_{\inn}}, \\
\ket{x_{i, 0}} &= \dsum_{j=0}^{k} \ket{x_{i-1, j}}, &\qquad&  1 \le i \le m, \\
\ket{x_{i, 1}} &= Ah \ket{x_{i, 0}}+h\ket{b},  &&  0 \le i < m, \\
\ket{x_{i, j}} &= (Ah/j) \ket{x_{i, j-1}},  &&  0 \le i < m, ~2 \le j \le k, \\
\ket{x_{m, j}} &= \ket{x_{m, j-1}},  && 1 \le j \le p. 
\end{alignat}
From these equations, we obtain
\begin{alignat}{2}
\ket{x_{0,0}} &= \ket{x_{\inn}}, \\
\ket{x_{0,j}} &= ((Ah)^j/j!) \ket{x_{0,0}} + ((Ah)^{j-1}/j!) h \ket{b},  &\qquad& 1 \le j \le k, \\
\ket{x_{1,0}} &= T_k(Ah) \ket{x_{0,0}} + S_k(Ah) h \ket{b}\nonumber \\
&\approx \exp{Ah} \ket{x_{\inn}}+ (\exp{Ah} - I )A^{-1}  \ket{b} , \\
\ket{x_{1,j}} &= ((Ah)^j/j!) \ket{x_{1,0}} + ((Ah)^{j-1}/j!) h \ket{b}, && 1 \le j \le k, \\ 
\ket{x_{2,0}} &= T_k(Ah) \ket{x_{1,0}} + S_k(Ah) h \ket{b}\nonumber \\
& \approx \exp{2Ah} \ket{x_{\inn}} + (\exp{2Ah}-I)A^{-1} \ket{b}, \\
&~~\vdots \nonumber \displaybreak[0] \\
\ket{x_{m-1,0}} &= T_k(Ah)\ket{x_{m-2,0}} + S_k(Ah) h \ket{b}\nonumber \\
&\approx \exp{Ah(m-1)} \ket{x_{\inn}} + (\exp{Ah(m-1)}-I) A^{-1} \ket{b}, \\
\ket{x_{m-1,j}} &= ((Ah)^j/j!) \ket{x_{m-1,0}} + ((Ah)^{j-1}/j!) h \ket{b}, && 1 \le j \le k, \\ 
\ket{x_{m,0}} &= T_k(Ah)\ket{x_{m-1,0}} + S_k(Ah) h \ket{b} \nonumber \\
&\approx \exp{Ahm} \ket{x_{\inn}} + (\exp{Ahm}-I) A^{-1} \ket{b}, \\
\ket{x_{m,j}} &= \ket{x_{m,0}}\nonumber \\
&\approx \exp{Ahm} \ket{x_{\inn}} + (\exp{Ahm}-I) A^{-1} \ket{b}, && 1 \leq j \leq p.
\label{historystate1}
\end{alignat}
In these approximations, we assume $k$ is sufficiently large that we can neglect the truncation errors $\nm{T_k(Ah)-\exp{Ah}}$ and $\nm{S_k(Ah)h -(\exp{Ah}-I) A^{-1} }$ (we make this more precise in \sec{solerr}). Note that $\ket{x}$ (defined by Eq.~(\ref{eq:lssol2})) includes a piece that can be interpreted as the \emph{history state} of the evolution 
\beq
\dfrac{d \vec x}{dt}=A \vec x+ \vec b
\label{eq:dfeq2}
\eeq
with the initial condition $\vec x(0)= \vec x_{\inn}$. More precisely, $\ket{x_{i,0}}$ is a good approximation of the system's state at time $ih$, for any $i \in \{0,1,\dots, m\}$.  Furthermore, $\ket{x_{m,0}}=\ket{x_{m,1}}=\dots=\ket{x_{m,p}}$ is a good approximation of 
\beq
\vec x(t) = \exp{At}  \vec x_{\inn} + (\exp{At}-I)  A^{-1}\vec b
\label{eq:dfeqsol2}
\eeq
for $t=mh$. 

If we measured the first register of the state $\ket{x}/\nm{\ket{x}}$ in the standard basis, we would obtain the state $\ket{x_{i,j}}/\nm{\ket{x_{i,j}}}$ for random $i,j$. By choosing a large $p$, we can ensure there is a high probability of obtaining $\ket{x_{m,0}}/\nm{\ket{x_{m,0}}}$, $\ket{x_{m,1}}/\nm{\ket{x_{m,1}}}$, $\dots$, or $\ket{x_{m,p}}/\nm{\ket{x_{m,p}}}$ (as we show in \sec{succprob}). Then this probability can be raised to $\omg{1}$ by using amplitude amplification (or by classical repetition). This is how we prepare a state close to $\vec x(t)/\nm{\vec x(t)}$ for $t=mh$. 

Note that the state we generate is not of the same form as the history state in Ref.~\cite{berry2014high}, which only encodes $\vec x(t)$ at intermediate times. The solution of our linear system not only encodes $\vec x (t)$ at intermediate times (via $| x_{i,0} \rangle$) but also encodes $((Ah)^j/j!) \vec x(t) + ((Ah)^{j-1}/j!) h \vec b$ at intermediate times (via $|x_{i,j} \rangle $).

To analyze the performance of this approach to solving differential equations, we establish three properties of this system of linear equations.  First, since the complexity of the best known QLSAs grows linearly with condition number (and sublinear complexity is impossible unless $\BQP=\PSPACE$ \cite{harrow2009quantum}), we analyze the condition number of $C_{m,k,p}$ (\sec{cond}).  Second, we show that the solution of the linear system includes a piece that is close to the solution of the associated differential equation (\sec{solerr}).  Third, we show that this piece can be obtained from a measurement that succeeds with appreciable probability (\sec{succprob}).

%%%%%%%%%%%%%%%%%%%%%%%%%%%%%%%%%%%%%%%%%%%%%%%%%%%%%%%%%%%%%%%%%%%%%%%%%%%%%%
%%%%%%%%%%%%%%%%%%%%%%%%%%%%%%%%%%%%%%%%%%%%%%%%%%%%%%%%%%%%%%%%%%%%%%%%%%%%%%
\section{Condition Number}
\label{sec:cond}

In this section, we upper bound the condition number of the matrix $C_{m,k,p}(A)$ under mild assumptions about $A$. We begin with a technical lemma that upper bounds the norms of the columns of the inverse of this matrix.

\begin{lemma}
Let $\lambda \in \bbC$ such that $\abs{\lambda}\le 1$ and $\Re(\lambda)\le 0$. Let $m, k, p \in \bbZ^+$ such that $k\ge 5$ and $(k+1)!\ge 2m$, and let $d=m(k+1)+p$. Then for any $n, l \in \{0,1,\dots, d\}$, 
\beq
\nm{C_{m,k,p}(\lambda)^{-1} \ket{l}} \le \sqrt{1.04 e I_0(2)(m+p)}
\eeq
with $I_0(2)<2.28$ a modified Bessel function of the first kind, and
\beq
\abs{\bra{n} C_{m,k,p}(\lambda)^{-1} \ket{l}} \le \sqrt{1.04 e}.
\eeq
\label{lem:cond}
\end{lemma}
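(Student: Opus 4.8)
The plan is to exploit the lower-triangular structure of $C\defeq C_{m,k,p}(\lambda)$: its $l$-th column $C^{-1}\ket{l}$ is exactly the vector $\sum_n x_n\ket{n}$ obtained by forward substitution from the delta-source recurrence encoded in the rows of $C\ket{x}=\ket{l}$. Explicitly, $x_n=0$ for $n<l$, $x_l=1$, and for $n>l$ one has $x_n=(\lambda/j)\,x_{n-1}$ at a Taylor index $n=i(k+1)+j$ with $1\le j\le k$, $x_n=x_{n-1}$ in the copy region, and $x_{(i+1)(k+1)}=\sum_{j=0}^{k}x_{i(k+1)+j}$ at a block start. I would first solve this recurrence in closed form and record three structural facts: within a block the entries equal the block-start value times $\lambda^j/j!$; each full downstream block multiplies the block-start value by $T_k(\lambda)$; and the partial block containing the source $l=i_0(k+1)+j_0$ produces a single factor $\sigma\defeq\sum_{r=0}^{k-j_0}\frac{j_0!}{(j_0+r)!}\lambda^r$ (which equals $T_k(\lambda)$ when $j_0=0$). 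The case where $l$ lies in the copy region is immediate, since then every downstream entry equals $1$.

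Everything then reduces to scalar estimates that use $\Re(\lambda)\le0$ and $\abs{\lambda}\le1$. Because $\abs{\lambda^j/j!}\le1$, each entry is bounded by the corresponding block-start value, and each block-start value is a product $\sigma\,T_k(\lambda)^{s}$ with $0\le s\le m-1$. For the $T_k$ factors I would use the integral form of the Taylor remainder, $e^{\lambda}-T_k(\lambda)=\frac{\lambda^{k+1}}{k!}\int_0^1 e^{\lambda t}(1-t)^{k}\,dt$, together with $\abs{e^{\lambda t}}=e^{t\Re(\lambda)}\le1$, to get $\abs{T_k(\lambda)}\le\abs{e^{\lambda}}+\frac{\abs{\lambda}^{k+1}}{(k+1)!}\le 1+\frac{1}{(k+1)!}$; then the hypothesis $(k+1)!\ge 2m$ gives $\abs{T_k(\lambda)}^{m}\le(1+\frac{1}{2m})^{m}\le\sqrt{e}$. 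The single factor $\sigma$ is a partial exponential sum whose modulus must be controlled by $\sqrt{1.04}$. The clean analytic bound $\abs{g_{j_0}(\lambda)}\le1$ for the full sum $g_{j_0}(\lambda)=j_0\int_0^1 e^{\lambda t}(1-t)^{j_0-1}\,dt$ is not sharp enough after truncation, so this step requires estimating the polynomial $\sigma$ directly on the region; its maximum occurs near $j_0=k-1$ and $\lambda=\pm i$, which is precisely where the constant $1.04$ and the hypothesis $k\ge5$ originate. Multiplying, every block-start value---and hence every entry $\abs{\bra{n}C_{m,k,p}(\lambda)^{-1}\ket{l}}$---is at most $\sqrt{1.04}\cdot\sqrt{e}=\sqrt{1.04\,e}$.

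For the column-norm bound I would sum squared entries block by block. Within a single block the squared entries sum to at most $(\text{block-start value})^2\sum_{r\ge0}1/(r!)^2=(\text{block-start value})^2\,I_0(2)$, which is where $I_0(2)=\sum_{r\ge0}1/(r!)^2$ enters; by the entry bound each block-start value has square at most $1.04\,e$, so each of the at most $m$ nonzero Taylor blocks contributes at most $1.04\,e\,I_0(2)$. The copy region consists of $p+1$ entries all equal to the final block-start value, contributing at most $(p+1)\,1.04\,e$. Hence $\nm{C_{m,k,p}(\lambda)^{-1}\ket{l}}^2\le 1.04\,e\,(m\,I_0(2)+p+1)$, and since $p\ge1$ and $I_0(2)>2$ one has $m\,I_0(2)+p+1\le I_0(2)(m+p)$, which gives the first inequality.

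The main obstacle is the sharp bound on $\abs{\sigma}$: the crude triangle inequality gives only $\abs{\sigma}\le 1+\frac{j_0!}{(k+1)!}$, which deteriorates to about $1+\frac{1}{k+1}$ as $j_0\to k$ and would be fatal once raised through the $m$-fold product. Extracting the genuine value $\sqrt{1.04}$ therefore means using the cancellation from $\Re(\lambda)\le0$ quantitatively over the full region $\{\abs{\lambda}\le1,\ \Re(\lambda)\le0\}$, rather than discarding phases. By comparison, the exponential control $\abs{T_k(\lambda)}^{m}\le\sqrt{e}$ afforded by $(k+1)!\ge 2m$, and the block-by-block summation that produces the $I_0(2)(m+p)$ scaling, are routine once the closed form of the recurrence is established.
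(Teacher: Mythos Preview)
Your approach is essentially the paper's: the same forward-substitution closed form, the same bound $\abs{T_k(\lambda)}\le 1+1/(k+1)!$ via the integral remainder (giving $\abs{T_k(\lambda)}^{m}\le\sqrt{e}$ from $(k+1)!\ge 2m$), the same identification of the partial-block factor $\sigma=T_{j_0,k}(\lambda)$ as the place where the constant $\sqrt{1.04}$ must come from, and the same block-wise summation yielding $I_0(2)$. The paper's treatment of $\abs{\sigma}\le\sqrt{1.04}$ is exactly what you anticipate: the cases $j_0=k$ and $j_0=k-1$ are handled directly (the latter gives $\abs{1+\lambda/k}^2\le 1+1/k^2\le 1.04$, which is where $k\ge 5$ is used), while $1\le j_0<k-1$ uses the integral representation for the full series plus a tail bound of $j_0!/(k+1)!\le 1/((k-1)k(k+1))$.

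One small correction to your last paragraph: the factor $\sigma$ appears only \emph{once} in each block-start value $\sigma\,T_k(\lambda)^{s}$; it is never raised through the $m$-fold product. So a bound like $\abs{\sigma}\le 1+1/(k+1)$ would not blow up with $m$. The reason the crude bound is inadequate is simply that $1+1/(k+1)>\sqrt{1.04}$ for the relevant range of $k$, so it fails to deliver the stated constant; the cancellation from $\Re(\lambda)\le 0$ is needed only to sharpen this single multiplicative constant, not to prevent exponential growth.
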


\begin{proof}
Recall the definitions of $T_k(z)$ and $S_{k}(z)$ in Eqs.~(\ref{eq:Tk}) and (\ref{eq:Sk}), respectively.  We also define
\beq
  T_{b,k}(z) \defeq \sum_{j=b}^k \frac{b! z^{j-b}}{j!}
\eeq
for $b\le k$.

Fix any $l \in \{0,1,\dots,d\}$. Suppose the solution of the linear system
\beq
C_{m,k,p}(\lambda) \ket{x} = \ket{l}
\eeq 
is
\beq
\ket{x}=\dsum_{i=0}^{m-1} \sum\limits_{j=0}^{k} x_{i,j} \ket{i(k+1)+j}  + \dsum_{j=0}^{p} x_{m,j} \ket{m(k+1)+j} 
\eeq
for some $x_{i,j} \in \bbC$. By the definition of $C_{m,k,p}(\lambda)$, the $x_{i,j}$s should satisfy 
\begin{alignat}{2}
x_{i, 0} - \dsum_{j=0}^{k} x_{i-1, j} &= \delta_{i(k+1), l}, &\qquad& 1 \le i \le m, \\
x_{i, j} - (\lambda/j) x_{i, j-1} &= \delta_{i(k+1)+j, l}, && 0 \le i < m,~ 1 \le j \le k, \\
x_{m, j} - x_{m, j-1} &= \delta_{m(k+1)+j, l}, && 1 \le j \le p, 
\label{eq:xij}
\end{alignat}
where $\delta_{i,j}=1$ if $i=j$, and $0$ otherwise. 

We consider the cases $0\le l < m(k+1)$ and $m(k+1) \le l \le d$ separately.
\bit
\item{Case 1: $0 \le l < m(k+1)$.} Suppose $l=a(k+1)+b$ for some $0 \le a < m$ and $0 \le b \le k$.  In this case, Eq.~(\ref{eq:xij}) implies 
\begin{alignat}{2}
{x_{i,j}} &= 0,  &\qquad& 0 \le i < a, ~ 0 \le j \le k, \\ 
{x_{a,j}} &= 0,  &&  0 \le j < b, \\  
{x_{a,j}} &= b! \lambda^{j-b}/j! , &&  b \le j \le k, \\ 
{x_{a+1,0}} &= T_{b,k}(\lambda), \\
{x_{a+1,j}} &= (\lambda^j / j!) {x_{a+1,0}}, && 1 \le j \le k,\\
{x_{a+2,0}} &=T_k(\lambda) {x_{a+1, 0}}=T_{k}(\lambda) T_{b,k}(\lambda), \\
&~~\vdots \nonumber \\
{x_{m,0}} &= T_k(\lambda) {x_{m-1,0}}
=(T_k(\lambda))^{m-a-1} T_{b,k}(\lambda), \\
{x_{m,j}} &= {x_{m, 0}}=(T_k(\lambda))^{m-a-1} T_{b,k}(\lambda), && 1 \le j \le p.
\end{alignat}
Since $\abs{\lambda} \le 1$, for any $b \le j \le k$, we have
\beq
\abs{x_{a,j}} = b! \abs{\lambda}^{j-b}/j! \le b!/j! \le 1. 
\eeq
Furthermore, since $\abs{\lambda}\le 1$ and $\Re(\lambda)\le 0$, by \lem{taylor1} and \lem{taylor2} in \apd{taylor}, we have
\beq
\abs{T_{k}(\lambda)} \le 1+\frac 1{(k+1)!} \le 1+\frac 1{2m},
\eeq
\beq
\abs{T_{b,k}(\lambda)} \le \sqrt{1.04}.
\eeq
Consequently, we have
\beq
\abs{x_{i, 0}}=\abs{T_k(\lambda)^{i-a-1}T_{b,k}(\lambda)} 
\le  (1+1/2m)^m \sqrt{1.04}  \le  \sqrt{1.04 e}, \qquad a+1 \le i \le m,
\eeq
and
\beq
\abs{x_{i,j}} = \abs{(\lambda^j/j!) x_{i,0}} \le \abs{x_{i,0}} \le \sqrt{1.04 e},
\qquad a+1 \le i \le m, ~1 \le j \le k.
\eeq
It follows that
\beq
\abs{{x_{m, j}}} = \abs{{x_{m, 0}}}  \le \sqrt{1.04 e}, \qquad 0 \le j \le p.
\eeq

Using these facts, we obtain
\begin{align}
\nm{\ket{x}}^2
& = \dsum_{i=0}^{m-1} \dsum_{j=0}^k \abs{x_{i,j}}^2
+ \dsum_{j=0}^p \abs{{x_{m,j}}}^2\nonumber \\
& = 
\dsum_{j=b}^k \abs{b!\lambda ^{j-b}/j!}^2
+\dsum_{i=a+1}^{m-1} \dsum_{j=0}^k \abs{(\lambda^j/j!){x_{i,0}}}^2
+ (p+1) \abs{{x_{m,0}}}^2 \nonumber \\
& \le 
\dsum_{j=b}^k (b!/j!)^2 
+\dsum_{i=a+1}^{m-1} \dsum_{j=0}^k (1/j!)^2 \abs{x_{i,0}}^2
+ (p+1) \abs{{x_{m,0}}}^2 \nonumber \\
& \le 
\dsum_{j=b}^k (b!/j!)^2 
+1.04 e \dsum_{i=a+1}^{m-1} \dsum_{j=0}^k (1/j!)^2 
+ 1.04 e (p+1)\nonumber   \\
& \le  
I_0(2)+1.04 e I_0(2)(m-a-1)+1.04e(p+1)\nonumber  \\
& \le 
1.04e I_0(2)(m+p),
\label{eq:case1}
\end{align}
where in the fifth step we use the facts
\begin{alignat}{2}
\dsum_{j=0}^k (1/j!)^2  &\le \dsum_{j=0}^{\infty} (1/j!)^2 = I_0(2), \label{eq:factsquaresum} \\
\dsum_{j=b}^k (b!/j!)^2 &\le \dsum_{s=0}^{k-b} 1/(b+1)^{2s} 
  \le \frac{1}{1-(b+1)^{-2}}
  = 1 + \frac{1}{b(b+2)}
  \le \frac{4}{3} < I_0(2), &\qquad& 1 \le b \le k.
\label{eq:factsquaresum2}
\end{alignat}

\item{Case 2: $m(k+1) \le l \le d$.} Suppose $l = m(k+1)+b$ for some $0 \le b \le p$. In this case, Eq.~(\ref{eq:xij}) implies
\begin{alignat}{2}
{x_{i,j}}&=0, &\qquad&  0 \le i <m, ~  0 \le j \le k,\\
{x_{m,j}}&=0, &&  0 \le j <b,\\
{x_{m,j}}&=1, && b \le j \le p.
\end{alignat}
It follows that
\begin{align}
\nm{\ket{x}}^2 
& = \dsum_{i=0}^{m-1} \dsum_{j=0}^k \abs{{x_{i,j}}}^2
+ \dsum_{j=0}^p \abs{{x_{m,j}}}^2 \nonumber \\
& =  p-b+1 \nonumber\\
& \le  p+1.
\label{eq:case2}
\end{align}
\eit

In both of the above cases, we have $\nm{\ket{x}}\le \sqrt{1.04 e I_0(2)(m+p)}$ and $\abs{\braket{n}{x}}\le \sqrt{1.04 e}$ for any $n \in \{0,1,\dots,d\}$, as claimed.
\end{proof}

Now we are ready to upper bound the norm of the inverse of the matrix.

\begin{lemma}
Let $A=VD V^{-1}$ be a diagonalizable matrix, where $D=\diag{\lambda_0, \lambda_1,\dots,\lambda_{N-1}}$ satisfies $\abs{\lambda_i} \le 1$ and $\Re(\lambda_i) \le 0$ for $i \in \{0,1,\dots,N-1\}$. Let $m, k, p \in \bbZ^+$ such that $k\ge 5$ and $(k+1)!\ge 2m$. Then
\beq
\nm{C_{m,k,p}(A)^{-1}} \le 3 \kappa_V \sqrt{k} (m+p),
\label{eq:invnm}
\eeq
where $\kappa_V = \nm{V} \cdot \nm{V^{-1}}$ is the condition number of $V$.
\label{lem:nminv}
\end{lemma}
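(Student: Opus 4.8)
The plan is to reduce the matrix-valued claim to the scalar bounds already proved in \lem{cond} by exploiting the diagonalization $A=VDV^{-1}$, and then to pass from the column-norm estimate of \lem{cond} to an operator-norm bound by routing through the Frobenius norm.

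First I would conjugate the second tensor factor by $V$. Every occurrence of $A$ in $C_{m,k,p}(A)$ appears in a term $\ketbra{\cdot}{\cdot}\otimes A/j$, while the remaining terms carry $\otimes I$; since $V^{-1}AV=D$ and $V^{-1}IV=I$, this gives $(I\otimes V^{-1})\,C_{m,k,p}(A)\,(I\otimes V)=C_{m,k,p}(D)$. Inverting and rearranging yields $C_{m,k,p}(A)^{-1}=(I\otimes V)\,C_{m,k,p}(D)^{-1}\,(I\otimes V^{-1})$, so submultiplicativity of the spectral norm gives $\nm{C_{m,k,p}(A)^{-1}}\le \kappa_V\,\nm{C_{m,k,p}(D)^{-1}}$ with $\kappa_V=\nm{V}\cdot\nm{V^{-1}}$.

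Next, because $D$ is diagonal, there is a permutation of basis vectors (a unitary reordering) that groups each block position with a fixed eigenvector of $D$ and exhibits $C_{m,k,p}(D)$ as the direct sum $\bigoplus_{i=0}^{N-1} C_{m,k,p}(\lambda_i)$ of scalar matrices. Operator norm is invariant under this reordering and equals the maximum over the summands, so $\nm{C_{m,k,p}(D)^{-1}}=\max_i \nm{C_{m,k,p}(\lambda_i)^{-1}}$. Each $\lambda_i$ satisfies $\abs{\lambda_i}\le 1$ and $\Re(\lambda_i)\le 0$, and the hypotheses $k\ge 5$ and $(k+1)!\ge 2m$ hold, so \lem{cond} applies to every block. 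I would then bound each scalar inverse by its Frobenius norm, $\nm{C_{m,k,p}(\lambda)^{-1}}\le \frobnm{C_{m,k,p}(\lambda)^{-1}}$, and use the fact that the squared Frobenius norm is the sum of the squared column norms. The matrix has $d+1=m(k+1)+p+1$ columns, and \lem{cond} bounds each column norm by $\sqrt{1.04\,e\,I_0(2)(m+p)}$, so $\nm{C_{m,k,p}(\lambda)^{-1}}^2\le \big(m(k+1)+p+1\big)\cdot 1.04\,e\,I_0(2)(m+p)$.

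It then remains to verify the clean numerical inequality $\big(m(k+1)+p+1\big)\cdot 1.04\,e\,I_0(2)\le 9k(m+p)$, which, after dividing the squared target $9k(m+p)^2$ by $(m+p)$, is what is needed to conclude $\nm{C_{m,k,p}(\lambda)^{-1}}\le 3\sqrt{k}(m+p)$; combined with the $\kappa_V$ factor from the first step this gives \eqref{eq:invnm}. The conceptual steps (similarity reduction and block decomposition) are routine, so the only place demanding care is this final constant: one must check that the extra additive $m+p+1$ coming from $d+1=mk+m+p+1$ is absorbed into $9k(m+p)$, which is exactly where the lower bounds $k\ge 5$ and $p\ge 1$ enter. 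The underlying reason the bound does not degrade to $\sqrt{mk(m+p)}\sim\sqrt{m}\,(m+p)^{1/2}$ form is that $\sqrt{m(m+p)}\le (m+p)$ since $m\le m+p$, which is what collapses the naive $\sqrt{m+p}$ factor into the stated $(m+p)$.
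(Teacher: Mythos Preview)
Your proof is correct and lands on the same numerical bound as the paper, but you reach it by a slightly different (and cleaner) route. The paper also conjugates by $I\otimes V$ to reduce to $C_{m,k,p}(D)$, but then it bounds $\nm{C(D)^{-1}\ket{\psi}}$ by expanding $\ket{\psi}=\sum_l\ket{l}\ket{\psi_l}$ in the \emph{first} register and applying the Cauchy--Schwarz-type estimate $\nm{\sum_l v_l}^2\le (d+1)\sum_l\nm{v_l}^2$ before invoking the column bound of \lem{cond}. You instead decompose in the \emph{second} register, observing that $C_{m,k,p}(D)\cong\bigoplus_i C_{m,k,p}(\lambda_i)$, which immediately gives $\nm{C(D)^{-1}}=\max_i\nm{C(\lambda_i)^{-1}}$ and reduces the problem to a single scalar block; the Frobenius-norm step then produces exactly the same factor $(d+1)\cdot 1.04\,e\,I_0(2)(m+p)$. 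Your packaging is more transparent because the direct-sum structure makes the reduction to the scalar lemma explicit, whereas the paper's Cauchy--Schwarz step is doing the same work implicitly. The final numerical check $(m(k+1)+p+1)\cdot 1.04\,e\,I_0(2)\le 9k(m+p)$ is identical in both arguments; the paper phrases it as $m(k+1)+p+1\le\tfrac{6}{5}k(m+p)$ (valid for $k\ge 5$, $p\ge 1$) together with $\tfrac{6}{5}\cdot 1.04\,e\,I_0(2)<9$.
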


\begin{proof}
For convenience, we will drop the subscripts $m, k, p$, and use $C(\cdot)$ to denote $C_{m,k,p}(\cdot)$.
We diagonalize $C(A)$ as
\beq
C(A) = \tilde{V} C(D) \tilde{V}^{-1},
\eeq
where $\tilde{V} \defeq \sum_{j=0}^{d} \ketbra{j}{j} \otimes V$ has condition number $\kappa_{\tilde{V}}=\kappa_V$.
Then we may upper bound $\nm{C(A)^{-1}}$ in terms of $\nm{C(D)^{-1}}$ as
\begin{align}
\nm{C(A)^{-1}} &= \nm{\tilde{V} C(D)^{-1} \tilde{V}^{-1}}\nonumber\\
&\le \nm{\tilde{V}} \cdot \nm{C(D)^{-1}} \cdot \nm{\tilde{V}^{-1}}\nonumber \\
&= \nm{C(D)^{-1}} \cdot \kappa_{\tilde{V}}.\label{diagCA}
\end{align}

We have
\beq \label{natural}
\nm{C(D)^{-1}} = \max_{\ket{\psi}} \frac{\nm{C(D)^{-1} \ket{\psi}}}{\nm{\ket{\psi}}}
\eeq
where we maximize over all states $\ket{\psi} \in \bbC^{(d+1)N}$
The state $\ket{\psi}$ can be written as $\ket{\psi}=\sum_{l=0}^{d} \ket{l} \ket{\psi_l}$ for some $\ket{\psi_l} \in \bbC^N$.
Then we have
\begin{align}
\nm{C(D)^{-1} \ket{\psi}}^2
&=\nm{\dsum_{l=0}^d C(D)^{-1} \ket{l} \ket{\psi_l}}^2\nonumber\\
&\le (d+1) \dsum_{l=0}^d \nm{ C(D)^{-1} \ket{l} \ket{\psi_l}}^2.\label{partway}
\end{align}
Now let $\ket{\psi_l}=\sum_{j=0}^{N-1} \psi_{j,l} \ket{j}$ for some $\psi_{j,l} \in \bbC$. With $D=\sum_{j=0}^{N-1} \lambda_j \ketbra{j}{j}$, we have
$C(D)=\sum_{j=0}^{N-1} C(\lambda_j) \otimes \ketbra{j}{j}$.
Hence, by \lem{cond}, we obtain
\begin{align}
\nm{C(D)^{-1} \ket{l} \ket{\psi_l}}^2 & = 
\nm{  \dsum_{j=0}^{N-1}
 \psi_{j,l} C(\lambda_j)^{-1} \ket{l} \ket{j}}^2\nonumber \\
& = 
 \dsum_{j=0}^{N-1}
 \abs{\psi_{j,l}}^2 \nm{C(\lambda_j)^{-1} \ket{l}}^2\nonumber \\
&\le 1.04eI_0(2)(m+p) \dsum_{j=0}^{N-1}
 \abs{\psi_{j,l}}^2\nonumber  \\
& = 1.04eI_0(2)(m+p) 
\nm{\ket{\psi_l}}^2.
\label{eq:invnmstate} 
\end{align}
Using this expression in Eq.~\eqref{partway} gives  
\begin{align}
\nm{C(D)^{-1} \ket{\psi}}^2
&\le 1.04eI_0(2)(d+1)(m+p) \dsum_{l=0}^d \nm{\ket{\psi_l}}^2\nonumber \\
& = 1.04eI_0(2)(m(k+1)+p+1)(m+p) \nm{\ket{\psi}}^2\nonumber \\
& \le \frac 65 \times 1.04eI_0(2) k(m+p)^2 \nm{\ket{\psi}}^2.
\end{align}
Using this result in Eq.~\eqref{natural} yields
\beq
\nm{C(D)^{-1}} \le 3 \sqrt{k}(m+p).
\label{eq:invnmdiag}
\eeq
Combining this expression with Eq.~\eqref{diagCA} then gives Eq.~\eqref{eq:invnm}, as claimed.
\end{proof}

It remains to upper bound the norm of the matrix.

\begin{lemma}
Let $A$ be an $N \times N$ matrix such that $\nm{A} \le 1$. Let $m, k, p \in \bbZ^+$, and $k\ge 5$. Then 
\beq
\nm{C_{m,k,p}(A)} \le 2\sqrt{k}.
\eeq
\label{lem:nm}
\end{lemma}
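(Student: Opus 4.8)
The plan is to split $C_{m,k,p}(A)$ into the four groups of blocks appearing in its definition, bound each by the triangle inequality, and group the terms carefully so that the constants are tight enough to reach $2\sqrt k$ rather than merely $O(\sqrt k)$. Write $C = C_0 + C_1 + C_2 + C_3$, where $C_0 = \dsum_{j=0}^{d}\ketbra{j}{j}\otimes I$ is the block-diagonal identity part (so $\nm{C_0}=1$), $C_1$ collects the subdiagonal Taylor blocks $-\ketbra{i(k+1)+j}{i(k+1)+j-1}\otimes A/j$, $C_2$ collects the ``summation'' blocks $-\ketbra{(i+1)(k+1)}{i(k+1)+j}\otimes I$, and $C_3$ collects the final blocks $-\ketbra{j}{j-1}\otimes I$.

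First I would record a simple fact: if a block matrix $M=\sum_\alpha \ketbra{r_\alpha}{c_\alpha}\otimes M_\alpha$ has all of its row indices $r_\alpha$ distinct and all of its column indices $c_\alpha$ distinct, then $\nm{M}=\max_\alpha \nm{M_\alpha}$. Indeed, applying $M$ to $\sum_c \ket{c}\ket{v_c}$ yields $\sum_\alpha \ket{r_\alpha}\ket{M_\alpha v_{c_\alpha}}$, and orthogonality of the $\ket{r_\alpha}$ together with $\sum_\alpha \nm{v_{c_\alpha}}^2 \le \nm{\sum_c\ket{c}\ket{v_c}}^2$ gives the bound. The key observation is that $C_1$ and $C_3$ \emph{together} satisfy this hypothesis: the row index set of $C_1$ is $\{i(k+1)+j : 0\le i<m,\ 1\le j\le k\}$ and that of $C_3$ is $\{m(k+1)+1,\dots,m(k+1)+p\}$, which are individually distinct and mutually disjoint; the same holds for their column index sets. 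Since every block of $C_1+C_3$ has norm $\nm{A}/j\le 1$ or $\nm{I}=1$, the fact gives $\nm{C_1+C_3}\le 1$.

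Next I would bound $C_2$. Its row indices $(i+1)(k+1)$ are distinct, and for distinct $i$ the source column ranges $\{i(k+1),\dots,i(k+1)+k\}$ are disjoint, so $C_2$ decomposes across $i$ into independent block-rows, each being the map sending $(v_0,\dots,v_k)\in(\bbC^N)^{k+1}$ to $-\sum_{j=0}^{k} v_j\in\bbC^N$. This map has norm $\sqrt{k+1}$ (its Gram operator is $(k+1)I$), hence $\nm{C_2}=\sqrt{k+1}$. Combining the three estimates by the triangle inequality gives $\nm{C}\le \nm{C_0}+\nm{C_1+C_3}+\nm{C_2}\le 2+\sqrt{k+1}$.

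Finally I would close with the elementary inequality $2+\sqrt{k+1}\le 2\sqrt k$, which upon squaring is equivalent to $9k^2-46k+9\ge 0$ and holds precisely for $k\ge 5$, matching the hypothesis of the lemma. The main obstacle is entirely one of constants: the crude four-way triangle inequality only yields $3+\sqrt{k+1}$, and a block Schur/Gershgorin bound $\nm{C}\le\sqrt{\nm{C}_1\,\nm{C}_\infty}$ yields $\sqrt{3k+6}$, both of which exceed $2\sqrt k$ at $k=5$. The nonobvious step that makes the bound tight is recognizing that $C_1$ and $C_3$ act on disjoint rows and disjoint columns, so they merge into a single generalized block-permutation of norm $1$, saving the extra additive constant.
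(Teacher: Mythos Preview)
Your proposal is correct and follows essentially the same approach as the paper: the paper also splits $C$ into three pieces---the block-diagonal identity, the summation rows, and the combined Taylor/padding subdiagonal---and bounds their norms by $1$, $\sqrt{k+1}$, and $1$ respectively, concluding $\nm{C}\le 2+\sqrt{k+1}\le 2\sqrt{k}$. The only cosmetic difference is that the paper groups your $C_1$ and $C_3$ together from the outset as a single piece and verifies its norm by computing $C_3 C_3^\dagger$, which is exactly your ``disjoint rows and disjoint columns'' observation written out.
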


\begin{proof}
Observe that $C\defeq C_{m,k,p}(A)$ can be written as the sum of three matrices:
\beq
C = C_1 + C_2 + C_3
\eeq
where
\beq
C_1 \defeq \dsum_{j=0}^{d} \ketbra{j}{j} \otimes I,
\eeq
\beq
C_2 \defeq -\dsum_{i=0}^{m-1} \dsum_{j=0}^{k} \ketbra{(i+1)(k+1)}{i(k+1)+j} \otimes I,
\eeq
\beq
C_3 \defeq -\dsum_{i=0}^{m-1} \dsum_{j=1}^{k} \ketbra{i(k+1)+j}{i(k+1)+j-1} \otimes A/j -\dsum_{j=d-p+1}^{d} \ketbra{j}{j-1} \otimes I,
\eeq
where $d=m(k+1)+p$. One can easily check that $\nm{C_1} = 1$, $\nm{C_2} = \sqrt{k+1}$, and $\nm{C_3} = \max\{\nm{A}, 1\}=1$ (this is trivial for $C_1$, and follows directly from a calculation of $C_2 C_2^\dag$ and $C_3 C_3^\dag$ for the other cases). Consequently,
\begin{align}
\nm{C} &\le \nm{C_1}+\nm{C_2}+\nm{C_3}\nonumber \\
& \le \sqrt{k+1} + 2\nonumber \\
& \le 2\sqrt{k}
\end{align}
as claimed.
\end{proof}

Combining \lem{nminv} and \lem{nm}, we obtain the following upper bound on the condition number of $C_{m,k,p}(A)$:
\begin{theorem}
Let $A=VD V^{-1}$ be a diagonalizable matrix such that $\nm{A} \le 1$, $D=\diag{\lambda_0, \lambda_1,\dots,\lambda_{N-1}}$ and $\Re(\lambda_i) \le 0$, for $i \in \{0,1,\dots, N-1\}$. Let $m, k, p \in \bbZ^+$ such that $k\ge 5$ and $(k+1)!\ge 2m$.  Let $C \defeq C_{m,k,p}(A)$, and let $\kappa_C=\nm{C}\cdot \nm{C^{-1}}$ be the condition number of $C$. Then 
\beq
\kappa_C \le 6 \kappa_V k(m+p),
\eeq
where $\kappa_V=\nm{V}\cdot \nm{V^{-1}}$ is the condition number of $V$.
\label{thm:cond}
\end{theorem}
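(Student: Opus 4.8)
The plan is to read off $\kappa_C$ directly as the product of the two bounds already established in this section. By definition $\kappa_C = \nm{C} \cdot \nm{C^{-1}}$, so it suffices to combine the upper bound on $\nm{C}$ from \lem{nm} with the upper bound on $\nm{C^{-1}}$ from \lem{nminv}. No new estimate is required; the entire analytic content sits in those two lemmas (and, beneath \lem{nminv}, in the column-norm bound of \lem{cond}).

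The one point I would check carefully before invoking the lemmas is that the theorem's hypotheses actually imply theirs. \lem{nm} needs only $\nm{A}\le 1$ and $k\ge 5$, both of which are assumed here. \lem{nminv} instead requires the per-eigenvalue bound $\abs{\lambda_i}\le 1$ (together with $\Re(\lambda_i)\le 0$, $k\ge 5$, and $(k+1)!\ge 2m$). The real-part condition and the conditions on $k$ and $m$ are assumed directly, but the magnitude condition is not stated explicitly. I would supply it by noting that the spectral radius is dominated by the spectral norm, so $\abs{\lambda_i} \le \nm{A} \le 1$ for every $i$; thus the spectral-norm hypothesis of the theorem is in fact stronger than what \lem{nminv} needs, and no further assumption on $A$ is necessary.

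With both lemmas applicable, the conclusion follows by a single multiplication:
\beq
\kappa_C = \nm{C}\cdot\nm{C^{-1}} \le \lb 2\sqrt{k} \rb \cdot \lb 3\kappa_V \sqrt{k}\,(m+p) \rb = 6\kappa_V k (m+p).
\eeq
I do not expect any genuine obstacle at this stage: the proof is purely a bookkeeping step that packages the two preceding norm bounds into a condition-number bound. The only subtlety worth flagging is the eigenvalue-magnitude observation above, which silently bridges the gap between the spectral-norm hypothesis stated here and the eigenvalue hypothesis used in \lem{nminv}.
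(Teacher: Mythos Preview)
Your proposal is correct and matches the paper's approach exactly: the theorem is stated as an immediate consequence of \lem{nm} and \lem{nminv}, obtained by multiplying the two bounds. Your explicit remark that $\abs{\lambda_i}\le\nm{A}\le 1$ (via the spectral-radius bound) is a useful clarification that the paper leaves implicit.
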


%%%%%%%%%%%%%%%%%%%%%%%%%%%%%%%%%%%%%%%%%%%%%%%%%%%%%%%%%%%%%%%%%%%%%%%%%%%%%%
%%%%%%%%%%%%%%%%%%%%%%%%%%%%%%%%%%%%%%%%%%%%%%%%%%%%%%%%%%%%%%%%%%%%%%%%%%%%%%
\section{Solution Error}
\label{sec:solerr}

In this section, we prove that the solution of the linear system defined by Eq.~(\ref{eq:ls}) encodes a good approximation of the solution of the differential equation defined by Eq.~(\ref{eq:dfeq2}) with the initial condition $\vec x(0)=\vec x_{\inn}$.

\begin{theorem}
Let $A=VD V^{-1}$ be a diagonalizable matrix, where $D=\diag{\lambda_0, \lambda_1,\dots,\lambda_{N-1}}$ satisfies $\Re(\lambda_i) \le 0$ for $i \in \{0,1,\dots,N-1\}$. Let $h \in \bbR^+$ such that $\nm{Ah} \le 1$. Let $\ket{{x}_{\inn}}, \ket{b} \in \bbC^{N}$, and let $\ket{x(t)}$ be defined by Eq.~(\ref{eq:dfeqsol2}). Let $m, k, p \in \bbZ^+$ such that $k\ge 5$ and $(k+1)!\ge 2m$. Let $\ket{x_{i,j}}$ be defined by Eqs.~(\ref{eq:lssol1}) and (\ref{eq:lssol2}). Then for any $j \in \{0,1,\dots, m\}$, 
\beq
\nm{\ket{x(jh)}-\ket{x_{j,0}}}
\le 2.8 \kappa_V j  (\nm{\ket{x_{\inn}}} + mh \nm{\ket{b}})/(k+1)!,
\eeq
where $\kappa_V = \nm{V} \cdot \nm{V^{-1}}$ is the condition number of $V$.
\label{thm:solerr}
\end{theorem}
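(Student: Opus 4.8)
The plan is to pass to the eigenbasis of $A$ so that the whole error analysis decouples into $N$ independent \emph{scalar} recursions, one per eigenvalue; this is the decisive move, since bounding powers of the approximate propagator $T_k(Ah)$ without diagonalizing would introduce $\kappa_V$ raised to a power, whereas the claimed bound carries only a single factor of $\kappa_V$. Writing $A = VDV^{-1}$, I set $\ket{y(t)} \defeq V^{-1}\ket{x(t)}$, $\ket{y_{i,j}} \defeq V^{-1}\ket{x_{i,j}}$, $\ket{y_{\inn}} \defeq V^{-1}\ket{x_{\inn}}$, and $\ket{c} \defeq V^{-1}\ket{b}$. Because the spectral radius is bounded by the spectral norm, the hypothesis $\nm{Ah}\le 1$ guarantees $\abs{\lambda_l h}\le 1$ for every eigenvalue. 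Since $D$ is diagonal, both the exact evolution and its Taylor approximation act componentwise; for the $l$-th component I would use the one-step (semigroup) form of the exact solution, $y_l(ih) = \exp{\lambda_l h}\,y_l((i-1)h) + (\exp{\lambda_l h}-1)\lambda_l^{-1}c_l$, alongside the approximate recursion $y_{l,i,0} = T_k(\lambda_l h)\,y_{l,i-1,0} + S_k(\lambda_l h)h\,c_l$.

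Subtracting these and inserting $\pm\,T_k(\lambda_l h)y_l((i-1)h)$ gives the scalar error recursion for $e_{i,l}\defeq y_l(ih)-y_{l,i,0}$, namely $e_{i,l} = T_k(\lambda_l h)e_{i-1,l} + \delta P_l\,y_l((i-1)h) + \delta q_l$, where $\delta P_l \defeq \exp{\lambda_l h}-T_k(\lambda_l h)$ and $\delta q_l \defeq [(\exp{\lambda_l h}-1)\lambda_l^{-1} - S_k(\lambda_l h)h]c_l$ are the two Taylor truncation defects. Since $\ket{y_{0,0}}=\ket{y_{\inn}}$ we have $e_{0,l}=0$, so unrolling yields $e_{j,l} = \sum_{i=1}^{j} T_k(\lambda_l h)^{j-i}(\delta P_l\,y_l((i-1)h) + \delta q_l)$.

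Next I would estimate the four ingredients under the hypotheses $\abs{\lambda_l h}\le 1$, $\Re(\lambda_l)\le 0$, $k\ge 5$, $(k+1)!\ge 2m$: (i) by \lem{taylor1}, $\abs{T_k(\lambda_l h)}\le 1+1/(k+1)!$, so $\abs{T_k(\lambda_l h)}^{j-i}\le(1+1/(2m))^m\le\sqrt{e}$; (ii) the tails give $\abs{\delta P_l}\le\sum_{s\ge k+1}1/s!\le\frac{k+2}{(k+1)(k+1)!}\eqqcolon\epsilon_k$, and, writing $z=\lambda_l h$ and factoring out $h$, $\abs{\delta q_l}\le h\abs{c_l}\sum_{s\ge k+1}1/s!\le h\abs{c_l}\epsilon_k$; (iii) using $\Re(\lambda_l)\le 0$ together with $(\exp{w}-1)/w=\int_0^1\exp{sw}\,ds$, the exact component obeys $\abs{y_l((i-1)h)}\le\abs{y_{\inn,l}}+(i-1)h\abs{c_l}$, which reflects that the inhomogeneous contribution grows only \emph{linearly} in time (this is what produces the $mh\nm{\ket{b}}$ term rather than something worse). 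Combining, the $i$-th summand is at most $\sqrt{e}\,\epsilon_k(\abs{y_{\inn,l}}+ih\abs{c_l})$, and summing $i=1,\dots,j$ with $\sum_{i=1}^j i=j(j+1)/2\le jm$ (valid since $j\le m$) gives $\abs{e_{j,l}}\le\sqrt{e}\,\epsilon_k\,j(\abs{y_{\inn,l}}+mh\abs{c_l})$.

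Finally I would reassemble by taking the $\ell_2$ norm over $l$ and applying Minkowski's inequality to obtain $\nm{\ket{y(jh)}-\ket{y_{j,0}}}\le\sqrt{e}\,\epsilon_k\,j(\nm{\ket{y_{\inn}}}+mh\nm{\ket{c}})$; then $\nm{\ket{x(jh)}-\ket{x_{j,0}}}\le\nm{V}\,\nm{\ket{y(jh)}-\ket{y_{j,0}}}$ combined with $\nm{\ket{y_{\inn}}}\le\nm{V^{-1}}\nm{\ket{x_{\inn}}}$ and $\nm{\ket{c}}\le\nm{V^{-1}}\nm{\ket{b}}$ yields the single factor $\kappa_V=\nm{V}\nm{V^{-1}}$. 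Since $\sqrt{e}\,(k+2)/(k+1)\le\sqrt{e}\cdot\tfrac{7}{6}<2.8$ for $k\ge 5$, this gives the stated bound. I expect the main obstacle to be the bookkeeping of the error propagation rather than any single hard estimate: one must verify the one-step semigroup identity for the exact solution, retain the linear-in-time growth of the exact intermediate states, and—most importantly—keep the factor $T_k(\lambda_l h)^{j-i}$ inside the diagonalized picture, which is exactly what prevents $\kappa_V$ from acquiring an exponent.
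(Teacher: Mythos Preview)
Your argument is correct, and it takes a genuinely different route from the paper. Both proofs pass to the eigenbasis via $V^{-1}$ and set up an error recursion in the $y$-variables, but they differ in which term gets inserted when splitting the one-step error. The paper inserts $\pm\,\exp{Dh}\ket{y_{j,0}}$, so the propagated error picks up the contractive factor $\nm{\exp{Dh}}\le 1$, at the price of having to bound the \emph{approximate} intermediate state $\nm{\ket{y_{j,0}}}$; this they do by invoking \lem{cond} (the entrywise bound on $C_{m,k,p}(\lambda)^{-1}$), obtaining $\nm{\ket{y_{i,0}}}\le \sqrt{1.04e}\,(\nm{\ket{y_{\inn}}}+mh\nm{\ket{c}})$ and hence the constant $\sqrt{1.04e}+1<2.8$. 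You instead insert $\pm\,T_k(\lambda_l h)y_l((i-1)h)$, so the propagated error carries $|T_k(\lambda_l h)|^{j-i}\le\sqrt{e}$ (this is where you spend the hypothesis $(k+1)!\ge 2m$), and you only need to bound the \emph{exact} components $|y_l((i-1)h)|$, which follows directly from $\Re(\lambda_l)\le 0$ and the integral representation of $(\exp{w}-1)/w$. Your approach is more elementary and self-contained, avoiding the detour through \lem{cond}; it also yields a slightly smaller constant $\sqrt{e}\cdot\tfrac{7}{6}<2.8$. The paper's route, on the other hand, reuses machinery already developed for the condition-number bound and works at the vector level throughout without the componentwise Minkowski step.
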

\begin{proof}
Note that $\ket{x(jh)}$ (the solution of the differential equation) satisfies the recurrence relation
\beq
\ket{ x((j+1)h) }  =  \exp{Ah} \ket{x (jh)} + ( \exp{Ah} - I ) A^{-1} \ket{ b},
\label{eq:recursion1}
\eeq
while $\ket{x_{j,0}}$ (in the solution of the associated linear system) satisfies the recurrence relation
\beq
\ket{x_{j+1,0}} = T_{k}(Ah) \ket{x_{j,0}}+S_k(Ah)h\ket{b}.
\label{eq:recursion2}
\eeq
Recall that $T_k(\lambda) = \sum_{j=0}^k \frac{\lambda^j}{j!}$ and $S_{k}(\lambda) = \sum_{j=1}^{k} \frac{\lambda^{j-1}}{j!}$. In addtion, we have $\ket{x(0)}=\ket{x_{0,0}}=\ket{x_{\inn}}$.

Define $\ket{y(t)} \defeq V^{-1} \ket{x(t)}$ and $\ket{y_{i,j}}=V^{-1}\ket{x_{i,j}}$. We will give an upper bound on $\delta_j \defeq \nm{\ket{y(jh)} - \ket{y_{j,0}}}$ and convert it into an upper bound on $\epsilon_j \defeq \nm{\ket{x(jh)} - \ket{x_{j,0}}}$.
Since $A=VDV^{-1}$, we have $\exp{Ah}=V\exp{Dh}V^{-1}$, $T_k(Ah)=VT_k(Dh)V^{-1}$, and $S_k(Ah)=VS_k(Dh)V^{-1}$. Then Eq.~(\ref{eq:recursion1}) implies
\beq
\ket{ y((j+1)h) }  =  \exp{Dh} \ket{y (jh)} + ( \exp{Dh} - I ) D^{-1} \ket{c}, 
\eeq
where $\ket{c} = V^{-1} \ket{b}$. Meanwhile, Eq.~(\ref{eq:recursion2}) implies
\beq
\ket{y_{j+1,0}} = T_{k}(Dh) \ket{y_{j,0}}+S_k(Dh)h\ket{c}.
\eeq
In addition, we have $\ket{y(0)}=\ket{y_{0,0}}=\ket{y_{\inn}} \defeq V^{-1}\ket{x_{\inn}}$. 

Now since $\Re(\lambda_i) \le 0$ for any $i \in \{0,1,\dots,N-1\}$, we have $\nm{\exp{Dh}} \le 1$. Moreover, since $\nm{Ah}\le 1$, we have $\abs{\lambda_i h} \le 1$ for any $i \in \{0,1,\dots,N-1\}$. Then since $\Re(\lambda_i h)\le 0$ for any $i \in \{0,1,\dots,N-1\}$, by \lem{taylor1} in \apd{taylor}, we get
\beq
\nm{\exp{Dh} - T_k(Dh)} \le  1/(k+1)! \, ,
\eeq
and by \lem{taylor3} in \apd{taylor}, we get
\beq
\nm{ S_k(Dh) - (\exp{Dh}-I)D^{-1}h^{-1}} \le 1/(k+1)!\, .
\eeq

The error $\delta_j = \nm{\ket{y(jh)} - \ket{y_{j,0}}}$ can be bounded as follows. Note that $\delta_0=0$, and using the triangle inequality, we have
\begin{align}
\delta_{j+1} 
& = \nm{ \exp{Dh} \ket{ y(jh) }+ ( \exp{Dh} - I ) D^{-1} \ket{ c}  -   T_k (Dh) \ket{ y_{j,0} } - S_k (Dh) h \ket{ c } }\nonumber \\
& \leq \nm{ \exp{Dh} \ket{ y(jh) }   -   T_k (Dh) \ket{ y_{j,0} } } + \nm{( \exp{Dh} - I ) D^{-1} \ket{ c }- S_k (Dh) h \ket{ c }}\nonumber\\
& \leq \nm{ \exp{Dh} \ket{ y (jh) }   -  \exp{Dh} \ket{ y_{j,0} } } + \nm{ \exp{Dh} |y_{j,0} \rangle   -   T_k (Dh) | y_{j,0} \rangle } \nonumber\\
&\quad + \nm{( \exp{Dh} - I ) D^{-1} \ket{ c }- S_k (Dh) h \ket{ c } }\nonumber\\
& \leq \nm{\exp{Dh}} \nm{ \ket{ y (jh)} - | y_{j, 0} \rangle } + \nm{ \exp{Dh} - T_k (Dh) } \nm{ | y_{j, 0} \rangle }  \nonumber \\
&\quad  +\nm{ (\exp{Dh} - I ) D^{-1} - S_k (Dh)h}  \nm{ \ket{ c} }\nonumber \\
& \leq \delta_j + \frac 1{(k+1)!}\left[ \nm{ \ket{ y_{j,0} } }  + h \nm{ \ket{c} }\right].
\end{align}
This implies
\begin{equation}
\delta_j \leq  \frac j{(k+1)!}\left[ \max_{0 \le i \le j} \nm{\ket{y_{i,0}} } + h \nm{ \ket{c}}\right] . 
\label{errorbound}
\end{equation}

Next, we give an upper bound on $\max_{0\le i \le m} \nm{\ket{y_{i,0}} }$. Fix any $i \in \{0,1,\dots,m\}$. Then we have
\beq
\ket{y_{i,0}}=\bra{i(k+1)}C_{m,k,p}(D)^{-1} \ket{z},
\eeq
where the bra $\bra{i(k+1)}$ acts on the first register, and
\beq
\ket{z} = \ket{0}\ket{y_{\inn}} + h \dsum_{j=0}^{m-1} \ket{ j(k+1) +1} \ket{c}. 
\eeq
Therefore, by the triangle inequality, 
\beq
\nm{\ket{y_{i,0}}} 
\le
\nm{\bra{i(k+1)}C_{m,k,p}(D)^{-1} \ket{0}\ket{y_{\inn}}}
+h \dsum_{j=0}^{m-1} \nm{\bra{i(k+1)}C_{m,k,p}(D)^{-1}\ket{ j(k+1) +1} \ket{c}}.
\label{eq:yi0bound}
\eeq
To bound both terms, we consider the general expression $\nm{\bra{i(k+1)}C_{m,k,p}(D)^{-1}\ket{s} \ket{\beta}}$ for integer $s$.
Let $\ket{\beta}=\sum_{l=0}^{N-1} \beta_l \ket{l}$ for some $\beta_l \in \bbC$. Then since $C_{m,k,p}(D)=\sum_{l=0}^{N-1} C_{m,k,p}(\lambda_l) \otimes \ketbra{l}{l}$,
we get
\beq
\bra{i(k+1)}C_{m,k,p}(D)^{-1} \ket{s}\ket{\beta}=\dsum_{l=0}^{N-1} \beta_l
\bra{i(k+1)}C_{m,k,p}(\lambda_l)^{-1} \ket{s}\ket{l}.
\eeq
Then, by \lem{cond}, we have
\begin{align}
\nm{\bra{i(k+1)}C_{m,k,p}(D)^{-1} \ket{s}\ket{\beta}}^2 
&= \dsum_{l=0}^{N-1} \abs{\beta_l}^2
\abs{\bra{i(k+1)}C_{m,k,p}(\lambda_l)^{-1} \ket{s}}^2\nonumber \\
&\le 1.04e \dsum_{l=0}^{N-1} \abs{\beta_l}^2\nonumber \\
&= 1.04e \nm{\ket{\beta}}^2.
\end{align}
Hence we obtain
\begin{align}
\nm{\bra{i(k+1)}C_{m,k,p}(D)^{-1} \ket{0}\ket{y_{\inn}}}
&\le \sqrt{1.04 e} \nm{\ket{y_{\inn}}}, \\
\nm{\bra{i(k+1)}C_{m,k,p}(D)^{-1} \ket{j(k+1)+1}\ket{c}}
&\le \sqrt{1.04 e} \nm{\ket{c}}.
\end{align}
Using these two facts and the triangle inequality, Eq.~(\ref{eq:yi0bound}) implies
\beq
\nm{\ket{y_{i,0}}} \le \sqrt{1.04 e} \lb \nm{\ket{y_{\inn}}} + mh \nm{\ket{c}}\rb.
\eeq
Since this holds for any $i \in \{0,1,\dots,m\}$, we get
\beq
\max_{0 \le i \le m}\nm{\ket{y_{i,0}}} \le \sqrt{1.04 e} \lb \nm{\ket{y_{\inn}}} + mh \nm{\ket{c}}\rb.
\label{eq:yi0nm}
\eeq

Now using Eqs.~(\ref{errorbound}) and (\ref{eq:yi0nm}), we get
\begin{align}
\delta_j &\le (\sqrt{1.04 e}+1) j \lb \nm{\ket{y_{\inn}}} + mh \nm{\ket{c}}\rb/(k+1)!\nonumber \\
&= 
(\sqrt{1.04 e}+1) j \lb \nm{V^{-1}\ket{x_{\inn}}} + mh \nm{V^{-1}\ket{b}}\rb/(k+1)!\nonumber  \\
&\le 
(\sqrt{1.04 e}+1) j \nm{V^{-1}} \lb  \nm{\ket{x_{\inn}}} + mh \nm{\ket{b}}/(k+1)!\rb .
\label{eq:deltam}
\end{align}
Finally, recall that $\ket{x(t)}=V\ket{y(t)}$ and $\ket{x_{i,j}}=V\ket{y_{i,j}}$. Thus we have 
\begin{align}
\epsilon_j &= \nm{\ket{x(jh)}-\ket{x_{j,0}}}\nonumber \\
&= \nm{V(\ket{y(jh)}-\ket{y_{j,0}})}\nonumber \\
&\le \nm{V} \nm{\ket{y(jh)}-\ket{y_{j,0}}}\nonumber \\
&= \nm{V} \delta_j\nonumber \\
&\le (\sqrt{1.04 e}+1) j \nm{V} \nm{V^{-1}} \lb  \nm{\ket{x_{\inn}}} + mh \nm{\ket{b}}\rb/(k+1)! \nonumber \\
&\le 2.8 \kappa_V j \lb  \nm{\ket{x_{\inn}}} + mh \nm{\ket{b}}\rb/(k+1)!
\end{align}
for any $j \in \{0,1,\dots,m\}$, as claimed.
\end{proof}

\thm{solerr} implies that by choosing $(k+1)! \ge 3\kappa_V m \lb  \nm{\ket{x_{\inn}}} + mh \nm{\ket{b}}\rb/\epsilon$, we can ensure $\nm{\ket{x_{j,0}}-\ket{x(jh)}}\le \epsilon$ for any $j \in \{0,1,\dots,m\}$, so that $\ket{x_{j,0}}$ is close to $\ket{x(jh)}$ for any $j$. Furthermore, by \lem{statedist2} in \apd{state}, if $\nm{\ket{x_{j,0}}-\ket{x(jh)}}\le \epsilon$ and $\nm{\ket{x(jh)}} \ge \alpha$, then $\nm{\ket{x_{j,0}}/\nm{\ket{x_{j,0}}} - \ket{x(jh)}/\nm{\ket{x(jh)}}} \le 2\epsilon/\alpha$. So, provided that $\nm{\ket{x(jh)}}$ is large enough, the normalized state $\ket{x_{j,0}}/\nm{\ket{x_{j,0}}}$ is also close to the normalized state $\ket{x(jh)}/\nm{\ket{x(jh)}}$. 

%%%%%%%%%%%%%%%%%%%%%%%%%%%%%%%%%%%%%%%%%%%%%%%%%%%%%%%%%%%%%%%%%%%%%%%%%%%%%%
%%%%%%%%%%%%%%%%%%%%%%%%%%%%%%%%%%%%%%%%%%%%%%%%%%%%%%%%%%%%%%%%%%%%%%%%%%%%%%
\section{Success Probability}
\label{sec:succprob}

In the previous section, we have shown that $\ket{x_{m,0}}/\nm{\ket{x_{m,0}}}=\ket{x_{m,1}}/\nm{\ket{x_{m,1}}}=\dots =\ket{x_{m,p}}/\nm{\ket{x_{m,p}}}$ is a good approximation of $\ket{x(mh)}/\nm{\ket{x(mh)}}$, provided the truncation order $k$ is sufficiently large. In this section, we show that such a state can be obtained with non-negligible probability by measuring the first register of $\ket{x}/\nm{\ket{x}}$ in the standard basis, provided the padding parameter $p$ is sufficiently large. 

\begin{theorem}
Let $A=VD V^{-1}$ be a diagonalizable matrix, where $D=\diag{\lambda_0, \lambda_1,\dots,\lambda_{N-1}}$ satisfies $\Re(\lambda_i) \le 0$ for $i \in \{0,1,\dots,N-1\}$. Let $h \in \bbR^+$ such that $\nm{Ah} \le 1$. Let $\ket{{x}_{\inn}}, \ket{b} \in \bbC^{N}$, and let $\ket{x(t)}$ be defined by Eq.~(\ref{eq:dfeqsol2}).   Let $m, k, p \in \bbZ^+$ such that $(k+1)!\ge 70\kappa_V m (\nm{\ket{x_{\inn}}}+mh \nm{\ket{b}})/\nm{\ket{x(mh)}}$, where $\kappa_V=\nm{V}\cdot\nm{V^{-1}}$ is the condition number of $V$. Let $g={\max_{t \in [0,mh]} \nm{\ket{x(t)}}}/{\nm{\ket{x(mh)}}}$. Let $\ket{x}$ be defined by Eq.~(\ref{eq:lssol1}) and let $\ket{x_{i,j}}$ be defined by Eq.~(\ref{eq:lssol2}). Then for any $j \in \{0,1,\dots,p\}$, 
\beq
\dfrac{\nm{\ket{x_{m,j}}}}{\nm{\ket{x}}} \ge \dfrac{1}{\sqrt{p+77 mg^2}}.
\label{eq:succprob}
\eeq
\label{thm:succprob}
\end{theorem}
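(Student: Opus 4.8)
The plan is to turn \eqref{eq:succprob} into a single norm comparison and then bound the ``wasted'' amplitude sitting in the non-final blocks. Since the defining recurrences force $\ket{x_{m,0}}=\ket{x_{m,1}}=\dots=\ket{x_{m,p}}$, we have
\[
\nm{\ket{x}}^2 = \dsum_{i=0}^{m-1}\dsum_{j=0}^{k}\nm{\ket{x_{i,j}}}^2 + (p+1)\nm{\ket{x_{m,0}}}^2 ,
\]
and \eqref{eq:succprob} is equivalent to $\nm{\ket{x}}^2 \le (p+77mg^2)\nm{\ket{x_{m,0}}}^2$. So it suffices to show that the non-final blocks contribute at most about $77mg^2\nm{\ket{x_{m,0}}}^2$.

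First I would pin down the ``physical'' entries $\ket{x_{i,0}}$. The hypothesis on $(k+1)!$ is calibrated precisely so that \thm{solerr} gives $\nm{\ket{x(ih)}-\ket{x_{i,0}}}\le \tfrac{i}{25m}\nm{\ket{x(mh)}}\le\tfrac1{25}\nm{\ket{x(mh)}}$ for $0\le i\le m$. Combined with $\nm{\ket{x(ih)}}\le g\nm{\ket{x(mh)}}$, $g\ge1$, and $\nm{\ket{x(mh)}}\le\tfrac{25}{24}\nm{\ket{x_{m,0}}}$ (the $i=m$ case), this yields $\nm{\ket{x_{i,0}}}\le\tfrac{13}{12}g\nm{\ket{x_{m,0}}}$ for every $i\le m$; note the $\kappa_V$ from \thm{solerr} is buried inside the small error and disappears from this estimate.

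The crux is to bound the higher Taylor entries $\ket{x_{i,j}}$, $j\ge1$, \emph{without} reintroducing $\kappa_V$, which is absent from the target bound. The recurrences give $\ket{x_{i,j}}=\tfrac{(Ah)^{j-1}}{j!}\ket{x_{i,1}}$ and $\sum_{j=0}^{k}\ket{x_{i,j}}=\ket{x_{i+1,0}}$, so $S_k(Ah)\ket{x_{i,1}}=\ket{x_{i+1,0}}-\ket{x_{i,0}}$. The key observation is that, since $\nm{Ah}\le1$,
\[
\nm{S_k(Ah)-I}\le\dsum_{j\ge2}\frac1{j!}=e-2<1,
\]
so a Neumann series bounds $\nm{S_k(Ah)^{-1}}\le\tfrac1{3-e}$ with no dependence on $V$. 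Hence $\nm{\ket{x_{i,1}}}\le\tfrac1{3-e}\lb\nm{\ket{x_{i,0}}}+\nm{\ket{x_{i+1,0}}}\rb\le\tfrac{13}{6(3-e)}g\nm{\ket{x_{m,0}}}$, and using $\nm{(Ah)^{j-1}}\le1$ together with $\sum_{j\ge1}1/(j!)^2=I_0(2)-1$ gives $\sum_{j=1}^{k}\nm{\ket{x_{i,j}}}^2\le(I_0(2)-1)\nm{\ket{x_{i,1}}}^2$.

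Finally I would assemble the estimates: each block obeys
\[
\dsum_{j=0}^{k}\nm{\ket{x_{i,j}}}^2\le\lrb{\lb\tfrac{13}{12}\rb^2+(I_0(2)-1)\lb\tfrac{13}{6(3-e)}\rb^2}g^2\nm{\ket{x_{m,0}}}^2 ,
\]
whose bracketed constant is below $77$; summing over the $m$ non-final blocks and folding the residual $(p+1)\nm{\ket{x_{m,0}}}^2$ into the main term via $mg^2\ge1$ yields $\nm{\ket{x}}^2\le(p+77mg^2)\nm{\ket{x_{m,0}}}^2$, which is \eqref{eq:succprob}. The main obstacle, and the step requiring the idea above, is precisely this $\kappa_V$-free control of $\ket{x_{i,1}}$: the naive bound $\nm{\ket{x_{i,1}}}\le\nm{\ket{x_{i,0}}}+h\nm{\ket{b}}$ leaves a term $h\nm{\ket{b}}$ that need not be small relative to $g\nm{\ket{x(mh)}}$, while inverting $S_k(Ah)$ through the eigenbasis of $A$ would introduce a spurious $\kappa_V$; writing $\ket{x_{i,1}}=S_k(Ah)^{-1}\lb\ket{x_{i+1,0}}-\ket{x_{i,0}}\rb$ and bounding $S_k(Ah)^{-1}$ by a Neumann series is what keeps the final estimate free of $\kappa_V$.
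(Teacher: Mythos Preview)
Your proposal is correct and follows essentially the same route as the paper: the same decomposition of $\nm{\ket{x}}^2$, the same use of \thm{solerr} to control $\nm{\ket{x_{i,0}}}$, and the same identity $\ket{x_{i+1,0}}-\ket{x_{i,0}}=\sum_{j=1}^{k}\ket{x_{i,j}}=S_k(Ah)\ket{x_{i,1}}$ to bound $\nm{\ket{x_{i,1}}}$ by $\tfrac{1}{3-e}\lb\nm{\ket{x_{i,0}}}+\nm{\ket{x_{i+1,0}}}\rb$. The only cosmetic difference is that you phrase this last step as a Neumann-series bound on $\nm{S_k(Ah)^{-1}}$, whereas the paper obtains the same $3-e$ factor by the reverse triangle inequality $\nm{\sum_{j\ge1}\ket{x_{i,j}}}\ge(1-\sum_{j\ge2}1/j!)\nm{\ket{x_{i,1}}}$; the two computations are identical in content.
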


\begin{proof}
Recall that $\ket{x_{m,j}}=\ket{x_{m,0}}$ for all $j\in\{1,2,\dots,p\}$. Thus it is sufficient to prove Eq.~(\ref{eq:succprob}) for $j=0$.

Define
\beq
\ket{x_{\rm good}} \defeq \dsum_{j=0}^p \ket{m(k+1)+j} \ket{x_{m,j}}
= \lb \dsum_{j=0}^p \ket{m(k+1)+j} \rb \ket{x_{m,0}}
\eeq
and
\beq
\ket{x_{\rm bad}} \defeq \dsum_{i=0}^{m-1}\dsum_{j=0}^k \ket{i(k+1)+j} \ket{x_{i,j}}.
\eeq
Then $\ket{x}=\ket{x_{\rm good}}+\ket{x_{\rm bad}}$ and $\braket{x_{\rm good}}{x_{\rm bad}} = 0$, so 
\begin{align}
\nm{\ket{x}}^2 &= \nm{\ket{x_{\rm good}}}^2+\nm{\ket{x_{\rm bad}}}^2\nonumber \\
&= (p+1) \nm{\ket{x_{m,0}}}^2 + \nm{\ket{x_{\rm bad}}}^2.
\label{eq:nmx}
\end{align}

Next we give a lower bound on $\nm{\ket{x_{m,0}}}$ and an upper bound on $\nm{\ket{x_{\rm bad}}}$. Let $q=\nm{\ket{x(mh)}}$. Then by the definition of $g$, we have $\nm{\ket{x(ih)}} \le gq$ for $i \in \{0,1,\dots,m-1\}$. Meanwhile, by \thm{solerr} and the choice of $k$, we have 
\beq
\nm{\ket{x_{i,0}}-\ket{x(ih)}} \le 0.04 q, \qquad 0 \le i \le m.
\eeq
As a result, we get
\beq
\nm{\ket{x_{i,0}}} \le (g+0.04) q \le 1.04 gq, \qquad 0 \le i \le m-1,
\label{eq:nmxi0}
\eeq
and
\beq
0.96 q \le  \nm{\ket{x_{m,0}}} \le 1.04 q.
\label{eq:nmxm0}
\eeq

Now for any $i \in \{0,1,\dots,m-1\}$, since $\ket{x_{i,j}} = ({Ah}/{j}) \ket{x_{i,j-1}}$, for $j \in \{2,3, \dots,k\}$, we get
\beq
\ket{x_{i,j}} = \dfrac{(Ah)^{j-1}}{j!} \ket{x_{i,1}}, \qquad 2 \le j \le k.
\eeq
Then, since $\nm{Ah} \le 1$, we get 
\beq
\nm{\ket{x_{i,j}}} \le \dfrac{\nm{\ket{x_{i,1}}}}{j!}, \qquad 2 \le j \le k.
\label{eq:nmdecay}
\eeq

Next, using the fact $\ket{x_{i+1,0}} = \ket{x_{i,0}}+\sum_{j=1}^k \ket{x_{i,j}}$ and the triangle inequality, we get
\begin{align}
2.08 g q
& \ge  
\nm{\ket{x_{i+1,0}}} + \nm{\ket{x_{i,0}}}\nonumber \\
& \ge 
\nm{\ket{x_{i+1,0}}-\ket{x_{i,0}}}\nonumber \\
&\ge  
\nm{\ket{x_{i,1}}} - \dsum_{j=2}^k \nm{\ket{x_{i,j}}}\nonumber \\
&\ge 
\lb 1 - \dsum_{j=2}^k \dfrac{1}{j!}\rb \nm{\ket{x_{i,1}}}\nonumber \\
& \ge 
(3-e) \nm{\ket{x_{i,1}}},
\end{align}
which implies
\beq
\nm{\ket{x_{i,1}}} \le \dfrac{2.08 g q}{3-e}, \qquad 0 \le i \le m-1.
\label{eq:nmxi1}
\eeq
Then it follows from Eq.~(\ref{eq:nmdecay}) that
\beq
\nm{\ket{x_{i,j}}} \le  \dfrac{2.08 g q}{j!(3-e)}, \qquad 0 \le i \le m-1, ~1 \le j \le k.
\label{eq:nmxij}
\eeq

Now using Eqs. \eqref{eq:nmxi0}, \eqref{eq:nmxi1}, and \eqref{eq:nmxij},
we obtain
\begin{align}
\nm{\ket{x_{\rm bad}}}^2
&=\dsum_{i=0}^{m-1}  \nm{\ket{x_{i,0}}}^2
+\dsum_{i=0}^{m-1} \dsum_{j=1}^k \nm{\ket{x_{i,j}}}^2\nonumber\\
&\le 1.04^2 m g^2 q^2 +
m \dsum_{j=1}^k \dfrac{( 2.08 gq)^2}{(j!)^2(3-e)^2}\nonumber \\
&\le 70.9 mg^2 q^2,
\label{eq:nmxbad}
\end{align}
where the last step follows from
\beq
\dsum_{j=1}^k\dfrac{1}{(j!)^2} \le I_0(2)-1 < 1.28
\eeq
as in Eq.~(\ref{eq:factsquaresum}).
Thus, combining Eqs.~\eqref{eq:nmx}, \eqref{eq:nmxm0}, and \eqref{eq:nmxbad} we get
\begin{align}
\dfrac{\nm{\ket{x_{m,0}}}^2}{\nm{\ket{x}}^2} &\ge
\dfrac{(0.96q)^2}{p(0.96q)^2+70.9 mg^2q^2} \nonumber\\
& \ge  \dfrac{1}{p+77 mg^2},
\end{align}
as claimed.
\end{proof}

\thm{succprob} implies that by choosing $p=m$, we can make the probability at least $1/78g^2$ for obtaining the state $\ket{x_{m,j}}/\nm{\ket{x_{m,j}}}$ when measuring the first register of $\ket{x}/\nm{\ket{x}}$ in the standard basis. This probability can be increased to $\omg{1}$ by amplitude amplification, which uses $\bgo{g}$ repetitions of the above procedure.

%%%%%%%%%%%%%%%%%%%%%%%%%%%%%%%%%%%%%%%%%%%%%%%%%%%%%%%%%%%%%%%%%%%%%%%%%%%%%%
%%%%%%%%%%%%%%%%%%%%%%%%%%%%%%%%%%%%%%%%%%%%%%%%%%%%%%%%%%%%%%%%%%%%%%%%%%%%%%
\section{State Preparation}
\label{sec:stateprep}

To apply the QLSA to a linear system of the form $M|x \rangle = | y \rangle$, we must also be able to prepare the state $| y \rangle$. To quantify the complexity of this subroutine, Ref.~\cite{berry2014high} assumes that $\vec x_{\inn}$ and $\vec b$ are sparse vectors whose entries are given by oracles. Instead of assuming sparsity, here we simply assume that we have controlled oracles that produce states proportional to $\vec x_{\inn}$ and $\vec b$, respectively.

The following lemma shows how to use these oracles to produce the state appearing on the right-hand side of our linear system. We write $\ket{\bar{\phi}}$ to denote the normalized version of $\ket{\phi}$, i.e., $\ket{\bar{\phi}} \defeq \ket{\phi}/\nm{\ket{\phi}}$, for any $\ket{\phi}$.

\begin{lemma} 
Let $\mathcal{O}_x$ be a unitary that maps $\ket{1}\ket{\phi}$ to $\ket{1}\ket{\phi}$ for any $\ket{\phi}$ and maps $\ket{0}\ket{0}$ to $\ket{0}\ket{\bar{x}_{\inn}}$, where $\bar{x}_{\inn}=\vec x_{\inn}/\nm{\vec x_{\inn}}$. Let $\mathcal{O}_b$ be a unitary that maps $\ket{0}\ket{\phi}$ to $\ket{0}\ket{\phi}$ for any $\ket{\phi}$ and maps $\ket{1}\ket{0}$ to $\ket{1}\ket{\bar{b}}$, where $\bar{b}=\vec b / \bnm{\vec b}$
. Suppose we know $\nm{\vec x_{\inn}}$ and $\bnm{\vec b}$. Then the state proportional to
\begin{equation}
\ket{0} \ket{x_{\inn}} + h \displaystyle\sum_{i=0}^{m-1} | i(k+1) +1 \rangle  \ket{ b}
\end{equation}
can be produced with a constant number of calls to $\mathcal{O}_x$ and $\mathcal{O}_b$, and $\poly{\log{mk}}$ elementary gates.
\label{lem:stateprepare}
\end{lemma}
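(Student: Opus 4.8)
The plan is to build the target state branch by branch with the help of a single ancilla flag qubit that selects which oracle writes into the $N$-dimensional system register. Writing $\mathcal N \defeq \sqrt{\nm{\vec x_{\inn}}^2 + m h^2\nm{\vec b}^2}$ for the norm of the target vector, the goal is to produce
\[
\frac{1}{\mathcal N}\left[\nm{\vec x_{\inn}}\,\ket 0\ket{\bar x_{\inn}} + h\nm{\vec b}\dsum_{i=0}^{m-1}\ket{i(k+1)+1}\ket{\bar b}\right],
\]
where the first register is the $(d+1)$-dimensional index register and the second is the system register; since $\nm{\vec x_{\inn}}\ket{\bar x_{\inn}}=\ket{x_{\inn}}$ and $h\nm{\vec b}\ket{\bar b}=h\ket b$, this is exactly proportional to the claimed vector.

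First I would introduce a flag qubit $f$ and, using the known scalars $\nm{\vec x_{\inn}},\nm{\vec b},h,m$, apply a one-qubit rotation preparing $\mathcal N^{-1}\lb\nm{\vec x_{\inn}}\ket 0_f + \sqrt m\,h\nm{\vec b}\,\ket 1_f\rb$. Conditioned on $f=1$ I would prepare the uniform superposition $m^{-1/2}\dsum_{i=0}^{m-1}\ket i$ in the index register (a standard routine costing $\poly{\log{m}}$ gates, even when $m$ is not a power of two) and then apply the reversible arithmetic map $\ket i\mapsto\ket{i(k+1)+1}$, leaving the index register in $\ket 0$ on the $f=0$ branch. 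With the system register initialized to $\ket 0$, this yields
\[
\frac{1}{\mathcal N}\left[\nm{\vec x_{\inn}}\,\ket 0_{\rm idx}\ket 0_f\ket 0 + h\nm{\vec b}\dsum_{i=0}^{m-1}\ket{i(k+1)+1}_{\rm idx}\ket 1_f\ket 0\right].
\]
Applying $\mathcal O_x$ and then $\mathcal O_b$ to the pair (flag, system register) writes $\ket{\bar x_{\inn}}$ on the $f=0$ branch and $\ket{\bar b}$ on the $f=1$ branch while leaving the complementary branch fixed, using each oracle exactly once. Finally, on the support of the state the flag equals the predicate ``index $\neq 0$'', so I would uncompute $f$ by XORing it with the OR of the index-register bits, returning $f$ to $\ket 0$ and leaving precisely the state of the first display in the index and system registers.

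The oracles are called a constant number of times, and every remaining operation---the flag rotation, the conditional uniform-superposition preparation, the arithmetic map, and the flag uncomputation---is reversible arithmetic on the $\bgo{\log{d}}=\poly{\log{mk}}$ qubits of the index register, so the total elementary-gate count is $\poly{\log{mk}}$. I expect the main subtlety to be respecting the partial specifications of the oracles: $\mathcal O_x$ is defined only on $\ket 1\ket\phi$ and $\ket 0\ket 0$, and $\mathcal O_b$ only on $\ket 0\ket\phi$ and $\ket 1\ket 0$, so the system register must start in $\ket 0$ and $\mathcal O_x$ must be applied before $\mathcal O_b$ so that each oracle acts only where it is defined. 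A secondary point is to condition the map $\ket i\mapsto\ket{i(k+1)+1}$ on $f=1$ so that the index-$0$ component on the $f=0$ branch is not corrupted, and to confirm that the uniform-superposition preparation and the index arithmetic each fit within $\poly{\log{mk}}$ gates.
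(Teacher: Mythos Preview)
Your proposal is correct and follows essentially the same approach as the paper: prepare a two-branch superposition with the right amplitudes, invoke $\mathcal{O}_x$ and $\mathcal{O}_b$ once each, and spread the $\vec b$ branch over the indices $i(k+1)+1$ using $\poly{\log{mk}}$ gates. The only cosmetic differences are that the paper uses the basis states $\ket{0},\ket{1}$ of the index register itself as the oracle control (so no separate flag qubit or uncomputation is needed) and performs the index-spreading \emph{after} the oracle calls rather than before; also, since $\mathcal{O}_b$ fixes $\ket{0}\ket{0}$ and $\mathcal{O}_x$ fixes $\ket{1}\ket{\bar b}$, the two oracles may in fact be applied in either order, so your ordering caveat, while harmless, is not actually needed.
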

\begin{proof}

Consider the initial state $|0 \rangle | 0 \rangle$, where the first register is the $(d+1)$-dimensional register corresponding to the block-level indexing of $C_{m,k,p}$, and the second register is the $N$-dimensional register that stores $\vec b$ and $\vec x_{\inn}$. We perform the unitary
\begin{align}
U & = \left( \frac{\nm{ \vec x_{\inn}}}{\sqrt{\nm{\vec x_{\inn}}^2+m h^2 \bnm{\vec b}^2}} |0 \rangle + \frac{\sqrt{m} h \bnm{ \vec b}}{\sqrt{\nm{\vec x_{\inn}}^2+m h^2 \bnm{\vec b}^2}} | 1 \rangle \right) \langle 0 | \nonumber \\
&\quad  + \left( \frac{\nm{ \vec x_{\inn}}}{\sqrt{\nm{\vec x_{\inn}}^2+m h^2 \bnm{\vec b}^2}} |1 \rangle - \frac{\sqrt{m} h \bnm{ \vec b}}{\sqrt{\nm{\vec x_{\inn}}^2+m h^2 \bnm{\vec b}^2}} | 0 \rangle \right) \langle 1 | \nonumber \\
&\quad + \sum_{j=2}^{d} |j \rangle \langle j| 
\end{align}
on the first register to get the state 
\begin{equation}
| \psi \rangle = \frac{1}{\sqrt{\nm{\vec x_{\inn}}^2+m h^2 \bnm{\vec b}^2}} \lb
{\nm{ \vec x_{\inn}}}|0 \rangle + {\sqrt{m} h \nm{ \vec b} } | 1 \rangle \rb | 0 \rangle .
\end{equation}

Next, we apply the unitaries $\mathcal{O}_x$ and $\mathcal{O}_b$ (in arbitrary order), and obtain the state
\begin{align}
\ket{\psi'}
&=  \dfrac{1}{\sqrt{\nm{\vec x_{\inn}}^2+m h^2 \bnm{\vec b}^2}} \lb 
\nm{\vec x_{\inn}} |0 \rangle \ket{\bar{x}_{\inn}} + {\sqrt{m} h \bnm{\vec b} } \ket{1} \ket{\bar{b}} \rb \nonumber\\
&=  \dfrac{1}{\sqrt{\nm{\vec x_{\inn}}^2+m h^2 \bnm{\vec b}^2}} \lb \ket{0}\ket{x_{\inn}} + {\sqrt{m} h } \ket{1} \ket{b} \rb.
\end{align}

Finally, we apply a unitary that maps $\ket{0}$ to $\ket{0}$ and maps $\ket{1}$ to $\frac{1}{\sqrt{m}}\sum_{j=0}^{m-1} | j(k+1) + 1 \rangle $ on the first register to get the state we need.
This can be done by standard techniques using $\poly{\log{mk}}$ elementary gates.
\end{proof}

%%%%%%%%%%%%%%%%%%%%%%%%%%%%%%%%%%%%%%%%%%%%%%%%%%%%%%%%%%%%%%%%%%%%%%%%%%%%%%
%%%%%%%%%%%%%%%%%%%%%%%%%%%%%%%%%%%%%%%%%%%%%%%%%%%%%%%%%%%%%%%%%%%%%%%%%%%%%%
\section{Main Result}
\label{sec:main}

In this section, we formally state and prove our main result on the quantum algorithm for linear ordinary differential equations.

\begin{theorem}
Suppose $A=VDV^{-1}$ is an $N \times N$ diagonalizable matrix, where $D=\diag{\lambda_0,\lambda_1,\dots,\lambda_{N-1}}$ satisfies $\Re(\lambda_j) \le 0$ for any $j \in \{0,1,\dots,N-1\}$. In addition, suppose $A$ has at most $s$ nonzero entries in any row and column, and we have an oracle $\mathcal{O}_A$ that computes these entries. Suppose $\vec{x}_{\inn}$ and $\vec b$ are $N$-dimensional vectors with known norms and that we have two controlled oracles, $\mathcal{O}_x$ and $\mathcal{O}_b$, that prepare the states proportional to $\vec{x}_{\inn}$ and $\vec b$, respectively. Let $\vec x$ evolve according to the differential equation
\beq
\dfrac{d \vec x}{d t} = A \vec x + \vec b
\eeq
with the initial condition $\vec x(0)=\vec x_{\inn}$. Let $T>0$ and 
\beq
g \defeq \max_{t \in [0,T]}\nm{\vec x(t)}/ \nm{\vec x(T)} .\label{decaybound} 
\eeq
Then there exists
a quantum algorithm that produces a state $\epsilon$-close to $\vec x(T)/\nm{\vec x(T)}$ in $\ell^2$ norm, succeeding with probability $\omg{1}$, with a flag indicating success, using 
\beq
\bgo{\kappa_V sg T \nm{A} \cdot \poly{\log{\kappa_V s g \beta T \nm{A}/\epsilon}}}
\eeq
 queries to $\mathcal{O}_A$, $\mathcal{O}_x$, and $\mathcal{O}_b$, where 
$\kappa_V=\nm{V}\cdot \nm{V^{-1}}$ is the condition number of $V$ and $\beta = ( \nm{\ket{x_{\inn}}}+T \nm{\ket{b}} )/ \nm{\ket{x(T)}}$. The gate complexity of this algorithm is  larger than its query complexity by a factor of $\poly{\log{\kappa_V s g \beta T \nm{A} N  /\epsilon}}$.

\label{thm:main}
\end{theorem}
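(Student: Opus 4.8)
The plan is to reduce the entire problem to a single application of a quantum linear systems algorithm (QLSA) to the system of \sec{ls}, followed by a measurement of the first register and amplitude amplification. First I would fix the discretization. Choosing the step size $h\le 1/\nm{A}$ enforces the hypothesis $\nm{Ah}\le 1$ required by \thm{solerr} and \thm{succprob}, and taking the number of steps $m=\ceil{T\nm{A}}$ (so that $mh=T$) together with the padding $p=m$ fixes both the evolution time and the success probability. The Taylor order $k$ is then chosen as the least integer satisfying $(k+1)!\ge 70\kappa_V m(\nm{\ket{x_{\inn}}}+mh\nm{\ket{b}})/\nm{\ket{x(T)}}$ and the corresponding accuracy bound of \thm{solerr}; because the right-hand sides are polynomial in the problem parameters, inverting the factorial (Stirling) gives $k=\bgo{\log{X}/\loglog{X}}=\tbgo{1}$, where $X$ abbreviates $\kappa_V m\beta/\epsilon$. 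This logarithmic size of $k$ is what ultimately yields the $\poly{\log{1/\epsilon}}$ dependence.

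Next I would assemble the two inputs the QLSA needs. For sparse access, I would observe that $C_{m,k,p}(Ah)$ inherits its sparsity from $A$: each block-row carries the identity on the diagonal plus at most one off-diagonal block of the form $\pm Ah/j$ or $\pm I$, so $C_{m,k,p}(Ah)$ has at most $s+\bgo{1}$ nonzero entries per row and column, and any entry is computable with $\bgo{1}$ calls to $\mathcal{O}_A$ and $\poly{\log{mkN}}$ gates of index arithmetic. For the right-hand side, \lem{stateprepare} prepares the required state with $\bgo{1}$ calls to $\mathcal{O}_x$ and $\mathcal{O}_b$. By \thm{cond} the condition number obeys $\kappa_C\le 6\kappa_V k(m+p)=\bgo{\kappa_V k m}$, so the QLSA of Ref.~\cite{childs2015quantum} outputs a state $\epsilon'$-close to $\ket{x}/\nm{\ket{x}}$ using $\bgo{s\kappa_C\poly{\log{s\kappa_C/\epsilon'}}}$ queries to these oracles.

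Then I would measure the first register. By \thm{succprob} with $p=m$, the total weight on the good outcomes is $\omg{1/g^2}$, so $\bgo{g}$ rounds of amplitude amplification raise the success probability to $\omg{1}$ and supply the required flag. For accuracy I would split the budget $\epsilon$: \thm{solerr} together with \lem{statedist2} bounds the distance between the ideal renormalized state $\ket{x_{m,0}}/\nm{\ket{x_{m,0}}}$ and $\ket{x(T)}/\nm{\ket{x(T)}}$ by a multiple of $1/((k+1)!\,\nm{\ket{x(T)}})$, which the choice of $k$ drives below $\epsilon/2$; and since the good subspace carries amplitude only $\omg{1/g}$, an error $\epsilon'$ in $\ket{x}/\nm{\ket{x}}$ becomes an error of order $\epsilon' g$ in the renormalized good state, so I would set $\epsilon'=\bgo{\epsilon/g}$. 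Multiplying the QLSA cost by the $\bgo{g}$ amplification rounds and substituting $\kappa_C=\bgo{\kappa_V k m}$, $m=\bgo{T\nm{A}}$, and $k=\tbgo{1}$ collapses the query count to $\bgo{\kappa_V sgT\nm{A}\cdot\poly{\log{\kappa_V sg\beta T\nm{A}/\epsilon}}}$, with the gate count larger by the advertised $\poly{\log{\kappa_V sg\beta T\nm{A}N/\epsilon}}$ coming from the index arithmetic and state preparation.

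The hard part will be the error bookkeeping rather than any individual estimate. The delicate point is tracking how the QLSA output error is magnified upon conditioning on a good measurement outcome and renormalizing---this is precisely where the factor $g$ enters the logarithm---and then checking that amplitude amplification preserves this accuracy up to constants. A secondary subtlety is confirming that the least $k$ meeting both factorial conditions is genuinely $\tbgo{1}$ in all parameters, so that $k$ contributes only polylogarithmically to $\kappa_C$ and hence to the stated complexity; the parameter $\beta$ appears inside the logarithm exactly through this choice of $k$.
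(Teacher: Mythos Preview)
Your proposal is correct and follows the paper's proof essentially step for step: the same parameter choices $h=T/\lceil T\nm{A}\rceil$, $m=p=\lceil T\nm{A}\rceil$, $k$ chosen via a factorial bound inverted with Stirling, the same reduction to the QLSA of Ref.~\cite{childs2015quantum} applied to $C_{m,k,p}(Ah)$, the same error split between Taylor truncation (\thm{solerr}) and QLSA precision, and $\bgo{g}$ rounds of amplitude amplification driven by \thm{succprob}.

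Two small slips worth flagging, neither of which affects the final complexity. First, the rows of $C_{m,k,p}(Ah)$ indexed $(i{+}1)(k{+}1)$ carry $k{+}1$ off-diagonal $-I$ blocks (the summation row), not one, so the sparsity is $\bgo{k+s}$ (the paper uses the looser $\bgo{ks}$), not $s+\bgo{1}$; since $k$ is polylogarithmic this is absorbed into the $\poly{\log{\cdot}}$ factor. Second, the amplitude on any \emph{single} good outcome $l\in S$ is only $\omg{1/(\sqrt{m}\,g)}$, so the paper sets the QLSA precision to $\delta=\epsilon/(25\sqrt{m}\,g)$ rather than $\bgo{\epsilon/g}$; the extra $\sqrt{m}$ again sits inside a logarithm. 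Your ``amplitude $\omg{1/g}$'' statement is true for the \emph{total} good-subspace weight, and one can indeed organize the argument that way (project onto $S$ and use that all $\ket{x_{m,j}}$ coincide), but then you must say a word about why the second register remains close to $\ket{\bar x(T)}$ after tracing out the first.
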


\begin{proof}
Recall that we use $\ket{\bar{\phi}}$ to denote the normalized version of $\ket{\phi}$.

%-----------------------------------------------------------------------------
\paragraph{Statement of the Algorithm} Let $h=T/\lceil T\nm{A} \rceil$, $m=p=T/h=\lceil T\nm{A}\rceil$, $\delta = \epsilon/(25\sqrt{m}g)$, $\epsilon\le 1/2$, and 
\beq
k=\left\lfloor \frac{2\log{\Omega}}{\log{\log{\Omega}}} \right\rfloor
\eeq
with
\begin{align}
\Omega &= 2.8\kappa_V m (\nm{\ket{x_{\inn}}}+mh\nm{\ket{b}})/(\delta\nm{\ket{x(T)}})  \nonumber \\
&=70g\kappa_V m^{3/2} (\nm{\ket{x_{\inn}}}+T\nm{\ket{b}})/(\epsilon \nm{\ket{x(T)}}) .
\end{align}
This choice of $k$ ensures that $(k+1)!\ge\Omega$.
\begin{comment}
To show this result, take $\zeta=\log{\log{\Omega}}$, so
\beq
k+1 \ge \frac{2 e^\zeta}{\zeta}.
\eeq
Then, using Stirling's approximation,
\begin{align}
\log{(k+1)!} &\ge (k+1)[\log{k+1}-1] + \frac 12 \log{2\pi(k+1)} \nonumber \\
&\ge \frac{2 e^\zeta}{\zeta}\left[ \log{\frac{2 e^\zeta}{\zeta}}-1\right] + \frac 12 \log{\frac{4\pi e^\zeta}{\zeta}} \nonumber \\
&\ge \frac{2 \log{\Omega}}{\zeta}\left[ \log{\frac{2 e^\zeta}{\zeta}}-1\right].
\end{align}
Then note that
\beq
\frac{d}{d\zeta}\frac{2}{\zeta}\left[ \log{\frac{2 e^\zeta}{\zeta}}-1\right]=\frac{2\log{\zeta/2}}{\zeta^2},
\eeq
which is zero for $\zeta=2$, in which case
\beq
\frac{2}{\zeta}\left[ \log{\frac{2 e^\zeta}{\zeta}}-1\right]=1.
\eeq
It is easily seen that this is a minimum, which implies that
\beq
\frac{2}{\zeta}\left[ \log{\frac{2 e^\zeta}{\zeta}}-1\right]\ge 1,
\eeq
so $\log{(k+1)!} \ge \log{\Omega}$ and therefore $(k+1)!\ge\Omega$ as claimed.
\end{comment}
Moreover, since $\Omega\ge 70$, $k\ge 5$.
Hence this is an appropriate choice of $k$ for the conditions of \thm{succprob}.
We build the linear system
\beq
C_{m, k, p} (Ah) \ket{x} = \ket{z} \defeq \ket{0} \ket{x_{\inn}} + h \displaystyle\sum_{i=0}^{m-1} | i(k+1) +1 \rangle  \ket{ b}.
\label{eq:ls2}
\eeq
We use the QLSA from Ref.~\cite{childs2015quantum} to solve this linear system and obtain a state $\ket{\bar{x}'}$ such that $\nm{\ket{\bar{x}}-\ket{\bar{x}'}}\le \delta$. Then we measure the first register of $\ket{\bar{x}'}$ in the standard basis, and conditioned on the outcome being in
\beq
S \defeq \{m(k+1), m(k+1)+1, \dots, m(k+1)+p\},
\eeq
we output the state of the second register. We will show that the probability of this event happening is $\omg{1/g^2}$. Using amplitude amplification \cite{brassard2002quantum}, we can raise this probability to $\omg{1}$ with $\bgo{g}$ repetitions of the above procedure.

%-----------------------------------------------------------------------------
\paragraph{Proof of Correctness} Let $d=m(k+1)+p$. Let $\ket{x_{i,j}}$ be defined by Eq.~(\ref{eq:lssol2}), and let $\ket{x_l}=\ket{x_{i,j}}$ for $l=i(k+1)+j$. Then we have
\beq
\ket{x} = \dsum_{l=0}^d  \ket{l} \ket{x_{l}}.
\eeq
Note that $\ket{x_{l}}=\ket{x_{m,0}}$ for any $l \in S$. Then by \thm{solerr} and our choice of parameters, we have
for any $l \in S$, 
\beq
\nm{\ket{x_l} - \ket{x(T)}} \le \delta \nm{\ket{x(T)}}.
\label{eq:distxlxt}
\eeq

Now let $\alpha_l$ be such that
\beq
\ket{\bar{x}} = \dsum_{l=0}^d \alpha_{l} \ket{l} \ket{\bar{x}_{l}}.
\eeq
Then for any $l \in S$, we have (by \thm{succprob} and our choice of parameters)
\beq
\alpha_l = \dfrac{\nm{\ket{x_l}} }{\nm{\ket{x}}} \ge \dfrac{1}{\sqrt{78m}g}.
\eeq
By Eq.~(\ref{eq:distxlxt}) and \lem{statedist2} in \apd{state}, we have
\beq
\nm{\ket{\bar{x}_l}-\ket{\bar{x}(T)}} \le 2\delta.
\label{eq:distbarxlxt}
\eeq

Now suppose the QLSA outputs the state
\beq
\ket{\bar{x}'} = \dsum_{l} \alpha_{l}' \ket{l} \ket{\bar{x}_{l}'}
\eeq
which satisfies
\beq
\nm{\ket{\bar{x}}-\ket{\bar{x}'}}\le \delta.
\eeq
Then for any $l \in S$, by \lem{statedist} in \apd{state}, we have
\beq
\nm{\ket{\bar{x}_l} - \ket{\bar{x}_l'}} \le \dfrac{2\delta}{\alpha_l - \delta}.
\label{eq:distbarxlxl}
\eeq
Then it follows from Eqs.~(\ref{eq:distbarxlxt}) and (\ref{eq:distbarxlxl}) that for any $l \in S$, 
\begin{align}
\nm{\ket{\bar{x}_l'} - \ket{\bar{x}(T)}} &\le
\nm{\ket{\bar{x}_l} - \ket{\bar{x}(T)}}+
\nm{\ket{\bar{x}_l} - \ket{\bar{x}_l'}} \nonumber\\
& \le 2\delta\left(1+\dfrac{1}{\alpha_l - \delta}\right) \le \epsilon.
\end{align}
Furthermore, by \lem{probdist} in \apd{state}, we have
\beq
\alpha_l' \ge \alpha_l -\delta \ge \dfrac{1}{11\sqrt{m} g}.
\eeq
Therefore, if we measure the first register of $\ket{\bar{x}'}$ in the standard basis, the probability of getting some outcome $l \in S$ is
\beq
\dsum_{l \in S}\abs{\alpha_l'}^2 \ge \dfrac{p}{121 m g^2} = \dfrac{1}{121g^2},
\eeq
and when this happens the state of the second register becomes $\ket{\bar{x}_l'}$ for some $l \in S$, which is $\epsilon$-close to the desired $\ket{\bar{x}(T)}$ in $\ell^2$ norm. The success probability can be raised to $\omg{1}$ by using $O(g)$ rounds of amplitude amplification.

%-----------------------------------------------------------------------------
\paragraph{Analysis of the Complexity} The matrix $C_{m,k,p}(A)$ is a $(d+1)N \times (d+1)N$ matrix with $\bgo{ks}$ nonzero entries in any row or column. By \thm{cond} and our choice of parameters, the condition number of $C_{m,k,p}(A)$ is $\bgo{\kappa_V km}$. Consequently, by Theorem 5 of Ref.~\cite{childs2015quantum}, the QLSA produces the state $\ket{x'}$ with
\beq
\bgo{\kappa_V k^2 m s \cdot \poly{\log{\kappa_V k m s/\delta}}}
=
\bgo{\kappa_V  s T \nm{A} \cdot \poly{\log{\kappa_V s g \beta T \nm{A} /\epsilon}}}
\eeq
queries to the oracles $\mathcal{O}_A$, $\mathcal{O}_{x}$, and $\mathcal{O}_{b}$, where $\beta = \lb \nm{\ket{x_{\inn}}+T \nm{\ket{b}}} \rb/ \nm{\ket{x(T)}}$, and its gate complexity is larger by a factor of $\poly{\log{\kappa_V k m sN/\delta}}=\poly{\log{\kappa_V s g \beta T \nm{A} N  /\epsilon}}$. Since amplitude amplification requires only $\bgo{g}$ repetitions of this procedure, the query complexity of our algorithm is $\bgo{\kappa_V  s g T \nm{A}\cdot\poly{\log{\kappa_V s g \beta T \nm{A} /\epsilon}}}$ and its gate complexity is larger by a factor of $\poly{\log{\kappa_V s g \beta T \nm{A} N  /\epsilon}}$, as claimed.
\end{proof}

%%%%%%%%%%%%%%%%%%%%%%%%%%%%%%%%%%%%%%%%%%%%%%%%%%%%%%%%%%%%%%%%%%%%%%%%%%%%%%
%%%%%%%%%%%%%%%%%%%%%%%%%%%%%%%%%%%%%%%%%%%%%%%%%%%%%%%%%%%%%%%%%%%%%%%%%%%%%%
\section{Discussion}
\label{sec:discussion}

In this paper, we have presented an algorithm for solving (possibly inhomogeneous) linear ordinary differential equations with constant coefficients, with exponentially improved performance over the algorithm of Ref.~\cite{berry2014high}.

The complexity of our algorithm depends on the parameter $g$ defined in Eq.~\eqref{decaybound}, which characterizes the extent to which the final solution vector decays relative to the solution vector at earlier times.  The success probability obtained from a single solution of the linear system (as analyzed in \sec{succprob}) decays with $g$, so we boost the success probability using amplitude amplification.  Dramatically improved dependence on $g$ is unlikely since a simulation of evolution in which the state decays can be used to implement postselection, following a procedure like that of Proposition 5 of Ref.~\cite{aaronson2005quantum}.  In particular, the ability to postselect on an exponentially small amplitude would imply $\BQP=\PP$, which is considered implausible.  Note that the algorithm of Ref.~\cite{berry2014high} has essentially the same limitation (although there it was stated in terms of an assumed upper bound on $g$).

Aside from its exponentially improved dependence on $\epsilon$, our approach has other advantages over Ref.~\cite{berry2014high}.  First, since linear multistep methods may be unstable, Ref.~\cite{berry2014high} requires careful selection of a stable method.  In contrast, the propagator $\exp{At}$ exactly evolves the state forward in time, so we need not concern ourselves with numerical stability. Furthermore, our algorithm has nearly linear scaling with respect to the evolution time $T$, which is a quadratic improvement over Ref.~\cite{berry2014high}. Since Hamiltonian simulation is a special case of our algorithm with $g=1$, the no-fast-forwarding theorem \cite{BACS07} implies that the dependence of our algorithm on $T$ is optimal up to logarithmic factors. We also obtain polynomial improvements over Ref.~\cite{berry2014high} for the dependence of the complexity on the sparsity $s$, norm $\nm{A}$, and condition number $\kappa_V$ of the transformation that diagonalizes $A$.

Although \thm{main} assumes that $A$ is diagonalizable, our algorithm can also produce approximate solutions for non-diagonalizable $A$. This is because diagonalizable matrices are dense within the set of all complex matrices: for any non-diagonalizable matrix $A$ and any $\delta>0$, there is a diagonalizable matrix $B$ such that $\nm{A-B} < \delta$.  Using such a $B$ in place of $A$, we can simulate Eq.~(\ref{eq:dfeq}) approximately.
{This approach can yield a matrix $B$ whose diagonalizing transformation has a condition number polynomial in $1/\delta$, so the complexity would no longer be $\poly{\log{1/\epsilon}}$.}

Our work raises some natural open problems.  For simplicity, we have assumed that the matrix $A$ and the inhomogeneity $\vec b$ are independent of $t$.  More generally, can one solve differential equations with time-dependent coefficients with complexity $\poly{\log{1/\epsilon}}$?  This is possible in the case of quantum simulation, i.e., when $A(t)$ is anti-Hermitian and $\vec b=0$ \cite{berry2014exponential}.  However, some aspects of our analysis appear to fail in the time-dependent case.

Finally, while our algorithm has optimal dependence on $\epsilon$ and nearly optimal dependence on $T$, the joint dependence on these parameters might be improved.  Recent work gave an algorithm for Hamiltonian simulation with complexity $\smash{O(t+\frac{\log{1/\epsilon}}{\loglog{1/\epsilon}})}$, providing an optimal tradeoff.  Can similar complexity be attained for more general differential equations?

%%%%%%%%%%%%%%%%%%%%%%%%%%%%%%%%%%%%%%%%%%%%%%%%%%%%%%%%%%%%%%%%%%%%%%%%%%%%%%
%%%%%%%%%%%%%%%%%%%%%%%%%%%%%%%%%%%%%%%%%%%%%%%%%%%%%%%%%%%%%%%%%%%%%%%%%%%%%%
\section*{Acknowledgements}

This research was supported by the Canadian Institute for Advanced Research, the National Science Foundation (grant number 1526380), and
IARPA (contract number D15PC00242).
In addition, DWB is funded by an Australian Research Council Discovery Project (DP160102426).

%%%%%%%%%%%%%%%%%%%%%%%%%%%%%%%%%%%%%%%%%%%%%%%%%%%%%%%%%%%%%%%%%%%%%%%%%%%%%%
%%%%%%%%%%%%%%%%%%%%%%%%%%%%%%%%%%%%%%%%%%%%%%%%%%%%%%%%%%%%%%%%%%%%%%%%%%%%%%

\providecommand{\bysame}{\leavevmode\hbox to3em{\hrulefill}\thinspace}

\begin{appendix}

%%%%%%%%%%%%%%%%%%%%%%%%%%%%%%%%%%%%%%%%%%%%%%%%%%%%%%%%%%%%%%%%%%%%%%%%%%%%%%
%%%%%%%%%%%%%%%%%%%%%%%%%%%%%%%%%%%%%%%%%%%%%%%%%%%%%%%%%%%%%%%%%%%%%%%%%%%%%%
\section{Lemmas About Taylor Series}

In this appendix, we establish some basic lemmas about the approximation of functions by truncated Taylor series.

\label{apd:taylor}
\begin{lemma}
Let $k \in \bbZ^+$, $z \in \bbC$, $\abs{z} \le 1$, and $\Re(z) \le 0$. Define $T_k(z)\defeq \sum_{j=0}^k \frac{z^j}{j!}$. Then
$\abs{T_k(z) - \exp{z}} \le 1/(k+1)!$ and $\abs{T_k(z)} \le 1+1/(k+1)!$.
\label{lem:taylor1}
\end{lemma}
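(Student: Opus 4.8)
The plan is to control the truncation error via the integral form of Taylor's remainder, which is the cleanest way to bring the hypothesis $\Re(z)\le 0$ into play. Writing $f(z)=\exp{z}$, whose $(k+1)$-th derivative is again $\exp{z}$, the standard integral remainder gives
\beq
\exp{z}-T_k(z)=\frac{z^{k+1}}{k!}\int_0^1 (1-t)^k \exp{tz}\,dt.
\eeq
First I would record this identity (it is the integral form of Taylor's theorem, obtained from the fundamental theorem of calculus together with repeated integration by parts).

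The key step is to bound the integrand uniformly. Since $\abs{\exp{tz}}=\exp{t\Re(z)}$ and $t\in[0,1]$, the assumption $\Re(z)\le 0$ forces $\abs{\exp{tz}}\le 1$. Pulling absolute values inside the integral and using $\abs{z}\le 1$ then yields
\beq
\abs{\exp{z}-T_k(z)}\le \frac{\abs{z}^{k+1}}{k!}\int_0^1 (1-t)^k\,dt=\frac{\abs{z}^{k+1}}{(k+1)!}\le \frac{1}{(k+1)!},
\eeq
which is the first claim; the elementary integral $\int_0^1(1-t)^k\,dt=1/(k+1)$ supplies the extra factor in the factorial.

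For the second bound I would combine this estimate with the triangle inequality and the same sign hypothesis, noting $\abs{\exp{z}}=\exp{\Re(z)}\le 1$ because $\Re(z)\le 0$, so
\beq
\abs{T_k(z)}\le \abs{\exp{z}}+\abs{T_k(z)-\exp{z}}\le 1+\frac{1}{(k+1)!}.
\eeq

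The only real obstacle is the sharpness of the constant. The naive tail estimate $\abs{\exp{z}-T_k(z)}\le\sum_{j\ge k+1}\abs{z}^j/j!\le \exp{\abs{z}}/(k+1)!$ only gives $e/(k+1)!$, which is too weak for the factor of $1$ claimed here. The whole point of the argument is that one must use $\Re(z)\le 0$, not merely $\abs{z}\le 1$; the integral representation is precisely what lets this hypothesis enter, through the uniform bound $\abs{\exp{tz}}\le 1$ on $[0,1]$. Once that observation is in place, the remainder of the proof reduces to a single elementary integral.
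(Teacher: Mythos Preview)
Your proof is correct and follows essentially the same approach as the paper: both use the integral form of the Taylor remainder and exploit $\Re(z)\le 0$ to bound $\abs{\exp{tz}}\le 1$ uniformly on the segment of integration, then evaluate the elementary integral and apply the triangle inequality for the second bound. The only cosmetic difference is that the paper parameterizes the integral along the ray $t\mapsto e^{i\phi}t$ with $t\in[0,\abs{z}]$, whereas you use the equivalent normalized form $t\in[0,1]$.
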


\begin{proof} Given $\phi$ such that $z=e^{i\phi}|z|$, define a function $f(x):=\exp{e^{i\phi} x}$ for $x\in\bbR$.
Then the integral form of the remainder for Taylor's theorem gives
\beq
f(x) = T_k(e^{i\phi} x) + \frac 1{k!} \int_{0}^x (x-t)^{k}  f^{(k+1)}(t) dt,
\eeq
so
\beq
\label{remainder}
f(x)-T_k(e^{i\phi} x) = \frac {e^{i(k+1)\phi}}{k!} \int_{0}^x (x-t)^{k} \exp{e^{i\phi} t} dt.
\eeq
Because $\Re(z) \le 0$, $|\exp{e^{i\phi} t}|\le 1$.
Taking $x=|z|$, we therefore have
\beq
\abs{T_k(z) - \exp{z}} 
= \frac 1{k!}\abs{\int_{0}^{|z|} (|z|-t)^{k} \exp{e^{i\phi} t} dt}\le \frac 1{k!}\int_{0}^{|z|} (|z|-t)^{k} dt = \frac {|z|^{k+1}}{(k+1)!} \le \frac 1{(k+1)!}.
\eeq
Using this bound,
\beq
\abs{T_k(z)} \le \abs{\exp{z}} + \abs{\exp{z}-T_k(z)}
\le 1+\frac 1{(k+1)!}
\eeq
and the lemma follows.
\end{proof}

\begin{lemma}
Let $b \in \bbZ^*$, $k \in \bbZ^+$, $k\ge 5$, and $b \le k$. Let $z \in \bbC$, $\abs{z} \le 1$, and $\Re(z) \le 0$. Define $T_{b,k}(z) \defeq \sum_{j=b}^k \frac{b! z^{j-b}}{j!}$. Then $\abs{T_{b,k}(z)}\le \sqrt{1.04}$.
\label{lem:taylor2}
\end{lemma}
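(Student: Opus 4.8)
The plan is to reduce everything to a single clean estimate, $\abs{T_{b,k}(z)}\le 1+b!/(k+1)!$, which is more than strong enough except at the two largest values of $b$, and then to dispatch $b\in\{k-1,k\}$ by direct computation. The engine behind the clean estimate is an integral representation. For $b\ge 1$ I would prove
\[
T_{b,k}(z)=b\int_0^1 (1-u)^{b-1}\,T_{k-b}(zu)\,du .
\]
This follows by expanding $T_{k-b}(zu)=\sum_{s=0}^{k-b}(zu)^s/s!$, integrating term by term, and using the Beta integral $\int_0^1(1-u)^{b-1}u^s\,du = s!\,(b-1)!/(s+b)!$; after reindexing $j=s+b$ the right-hand side is exactly $\sum_{j=b}^k b!\,z^{j-b}/j!$. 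The case $b=0$ is just $T_{0,k}=T_k$, already covered by \lem{taylor1}.

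Next I would bound the integrand pointwise. Because $\abs{z}\le1$ and $\Re z\le0$, every $w=zu$ with $u\in[0,1]$ again satisfies $\abs{w}\le1$ and $\Re w\le0$, so $\abs{e^{w}}=e^{\Re w}\le1$, and the Taylor remainder estimate established inside the proof of \lem{taylor1} gives $\abs{T_r(w)-e^{w}}\le\abs{w}^{r+1}/(r+1)!$. Hence $\abs{T_{k-b}(zu)}\le 1+u^{k-b+1}/(k-b+1)!$. Substituting this into the integral representation and evaluating the remaining Beta integral $\int_0^1(1-u)^{b-1}u^{k-b+1}\,du=(k-b+1)!\,(b-1)!/(k+1)!$ makes the factorials collapse, leaving $\abs{T_{b,k}(z)}\le 1+b!/(k+1)!$ for every $b\ge1$ (and for $b=0$ by \lem{taylor1}).

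A short case analysis on $b$ then finishes. If $b\le k-2$ then $b!/(k+1)!\le (k-2)!/(k+1)! = 1/[(k-1)k(k+1)]\le 1/120$ for $k\ge5$, so $\abs{T_{b,k}(z)}\le 1+1/120<\sqrt{1.04}$. The two extreme values escape this bound (there it only gives about $1+1/k$), so I would treat them directly: $T_{k,k}(z)=1$ trivially, while $T_{k-1,k}(z)=1+z/k$. For the latter the hypothesis $\Re z\le0$ is indispensable: writing $z=x+iy$ with $x\le0$ and $x^2+y^2\le1$ gives $\abs{1+z/k}^2=(1+x/k)^2+(y/k)^2\le 1+1/k^2\le 1.04$, since $0\le 1+x/k\le 1$ and $y^2\le1$ with $k\ge5$. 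This is precisely where the constant $1.04=1+1/5^2$ originates.

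The main obstacle is obtaining an estimate that is at once clean and tight: the naive triangle inequality $\abs{T_{b,k}(z)}\le\sum_{s}b!/(b+s)!$ discards the hypothesis $\Re z\le0$ and is far too weak, while a downward induction on the recursion $T_{b,k}(z)=1+\tfrac{z}{b+1}T_{b+1,k}(z)$ fails to close for small $b$ because the real part of the cross term $\tfrac{z}{b+1}T_{b+1,k}(z)$ is not controlled. The integral representation is exactly the device that turns $\Re z\le0$ into the usable pointwise inequality $\abs{e^{zu}}\le1$, and deriving it is the crux; once it is available the remaining work is routine, the only genuine subtlety being that the boundary indices $b\in\{k-1,k\}$ must be peeled off and handled by hand.
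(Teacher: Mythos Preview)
Your proof is correct and follows the same overall architecture as the paper: handle $b=0$ via \lem{taylor1}, dispatch $b=k$ and $b=k-1$ by direct computation (with the $b=k-1$ case being the source of the constant $1.04=1+1/25$), and for $1\le b\le k-2$ use an integral representation exploiting $\abs{e^{zu}}\le 1$ to obtain $\abs{T_{b,k}(z)}\le 1+b!/(k+1)!\le 1+1/120$.

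The one genuine difference is the integral identity. The paper writes $\sum_{j=b}^{k}z^j/j!$ as the difference of two Taylor-remainder integrals for $e^z$ (the tail starting at $b$ minus the tail starting at $k+1$), bounds each using $\abs{e^{e^{i\phi}t}}\le 1$, and evaluates the resulting polynomial integrals. Your route is more modular: the Beta-integral identity $T_{b,k}(z)=b\int_0^1(1-u)^{b-1}T_{k-b}(zu)\,du$ expresses $T_{b,k}$ as a probability-weighted average of rescaled copies of $T_{k-b}$, so you can invoke the already-proved bound $\abs{T_{k-b}(zu)}\le 1+u^{k-b+1}/(k-b+1)!$ from \lem{taylor1} wholesale and let the Beta integral collapse the factorials. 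Both arguments arrive at exactly the same inequality $1+b!/(k+1)!$; yours has the advantage of recycling \lem{taylor1} as a black box rather than reopening the remainder formula, while the paper's version keeps everything at the level of a single function $e^z$ and avoids introducing a second identity.
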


\begin{proof}
Again we can take $z=e^{i\phi}|z|$ and $f(x):=\exp{e^{i\phi} x}$ for $x\in\bbR$.
If $b=0$, then $T_{b,k}(z)=T_k(z)$, so from \lem{taylor1} we have $\abs{T_{b,k}(z)} \le 1+1/(k+1)!<\sqrt{1.04}$.
If $b=k$, then $\abs{T_{b,k}(z)}=1$.
If $b=k-1$, then
\beq
T_{b,k}(z) = 1+\frac z{k}.
\eeq
Then we obtain
\beq
\abs{T_{b,k}(z)} = \sqrt{1+\frac{2\Re(z)}{k} + \frac{|z|^2}{k^2}} \le \sqrt{1+\frac 1{k^2}},
\eeq
since $\Re(z) \le 0$ and $\abs{z} \le 1$.
Then, for $k\ge 5$, we obtain $\abs{T_{b,k}(z)}\le \sqrt{1.04}$.

Otherwise, $1 \le b<k-1$, and using the integral form of the remainder gives
\beq
\sum_{j=b}^{\infty} \frac{z^{j}}{j!} = \frac {e^{ib\phi}}{(b-1)!} \int_{0}^{|z|} (|z|-t)^{b-1} \exp{e^{i\phi} t} dt .
\eeq
Then limiting the sum to $k$ gives
\beq
\sum_{j=b}^{k} \frac{z^{j}}{j!} = \frac {e^{ib\phi}}{(b-1)!} \int_{0}^{|z|} (|z|-t)^{b-1} \exp{e^{i\phi} t} dt - \frac {e^{i(k+1)\phi}}{k!} \int_{0}^{|z|} (|z|-t)^{k} \exp{e^{i\phi} t} dt .
\eeq
Therefore, we obtain a formula for $T_{b,k}(z)$ as
\beq
T_{b,k}(z) = \frac {b}{|z|^b} \int_{0}^{|z|} (|z|-t)^{b-1} \exp{e^{i\phi} t} dt - \frac {b!e^{i(k-b+1)\phi}}{k!|z|^b} \int_{0}^{|z|} (|z|-t)^{k} \exp{e^{i\phi} t} dt .
\eeq
Again $|\exp{e^{i\phi} t}|\le 1$, so taking the absolute value gives us
\begin{align}
\abs{T_{b,k}(z)} &\le \frac {b}{|z|^b} \int_{0}^{|z|} (|z|-t)^{b-1} dt + \frac {b!}{k!|z|^b} \int_{0}^{|z|} (|z|-t)^{k} dt \nonumber \\
&= 1+ \frac{b! |z|^{k-b+1}}{(k+1)!}\nonumber \\
&\le 1+ \frac{b!}{(k+1)!}\nonumber \\
&\le 1+ \frac{1}{(k-1)k(k+1)}\nonumber \\
&\le 1+\frac 1{120},
\end{align}
where in the last line we have used $k\ge 5$.
Hence for $0 \le b\le k$, we obtain $\abs{T_{b,k}(z)}\le \sqrt{1.04}$ as required.
\end{proof}

\begin{lemma} Let $k \in \bbZ^+$. Let $z \in \bbC$, $|z| \leq 1$. Define 
$S_k(z) \defeq \sum_{j=1}^k \frac{z^{j-1}}{j!}$. Then $\abs{S_{k}(z) - (\exp{z} - 1)z^{-1} } \leq 1/(k+1)!$.
\label{lem:taylor3}
\end{lemma}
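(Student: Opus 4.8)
The plan is to reduce this directly to \lem{taylor1} through an elementary algebraic identity, so that no fresh estimation is needed. First I would observe that multiplying the series defining $S_k$ by $z$ shifts every index up by one, giving
\beq
z\, S_k(z) = \dsum_{j=1}^k \frac{z^j}{j!} = T_k(z) - 1.
\eeq
Since also $\exp{z} - 1 = z \cdot (\exp{z}-1)z^{-1}$, subtracting these two relations and dividing by $z$ yields the clean identity
\beq
S_k(z) - (\exp{z}-1)z^{-1} = \frac{T_k(z) - \exp{z}}{z},
\eeq
valid for $z \neq 0$. This converts the statement about $S_k$ into one about the already-analyzed quantity $T_k(z)-\exp{z}$.

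Next I would invoke \lem{taylor1}, but in the slightly sharper form that its proof actually establishes: before the final use of $\abs{z}\le 1$, the integral-remainder estimate there gives $\abs{T_k(z) - \exp{z}} \le \abs{z}^{k+1}/(k+1)!$. Dividing by $\abs{z}$ then produces
\beq
\abs{S_k(z) - (\exp{z}-1)z^{-1}} = \frac{\abs{T_k(z) - \exp{z}}}{\abs{z}} \le \frac{\abs{z}^{k}}{(k+1)!} \le \frac{1}{(k+1)!},
\eeq
which is exactly the claimed bound. The removable-singularity case $z=0$ is handled separately by continuity: the difference $S_k(z)-(\exp{z}-1)z^{-1}$ is an entire function with a zero at the origin, so it vanishes there and the bound holds trivially.

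The main subtlety is that the constant $1$ in the bound is not a consequence of $\abs{z}\le 1$ alone. The sharp estimate $\abs{T_k(z)-\exp{z}}\le \abs{z}^{k+1}/(k+1)!$ rests on $\abs{\exp{e^{i\phi}t}}\le 1$ along the integration path, which in turn requires $\Re(z)\le 0$; without that sign condition the best one gets from the series tail is $\sum_{j\ge k+1}1/j!$, which already exceeds $1/(k+1)!$. I would therefore carry the hypothesis $\Re(z)\le 0$ through the argument, exactly as in the companion \lem{taylor1} and \lem{taylor2}. This costs nothing in the applications, since this lemma is used (in \thm{solerr}) with $z=\lambda_i h$ where $\Re(\lambda_i)\le 0$, so the reduction above is precisely what the paper needs.
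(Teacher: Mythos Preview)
Your argument is essentially the paper's: both exploit the identity $z\,S_k(z)=T_k(z)-1$ to convert the claim into the integral-remainder bound $\abs{T_k(z)-\exp{z}}\le\abs{z}^{k+1}/(k+1)!$ from the proof of \lem{taylor1}, then divide by $\abs{z}$. You are also right that the hypothesis $\Re(z)\le 0$ is genuinely needed and is silently used in the paper's own proof (through $\abs{\exp{e^{i\phi}t}}\le 1$); the printed statement omits it, and indeed the bound fails at $z=1$, $k=1$, where the left-hand side equals $e-2>1/2=1/(k+1)!$.
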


\begin{proof}
Again we can take $z=e^{i\phi}|z|$ and $f(x):=\exp{e^{i\phi} x}$ for $x\in\bbR$.
Then using Eq.~\eqref{remainder},
\begin{align}
\frac {e^{i(k+1)\phi}}{k!} \int_{0}^x (x-t)^{k} \exp{e^{i\phi} t} dt &= f(x)-T_k(e^{i\phi} x) \nonumber \\
&= (f(x)-1)-(T_k(e^{i\phi} x)-1)\nonumber \\
&= \exp{e^{i\phi} x}-1-e^{i\phi} x S_k(e^{i\phi} x).
\end{align}
Dividing by $e^{i\phi} x$ gives
\beq
\frac{\exp{e^{i\phi} x}-1}{e^{i\phi} x}- S_k(e^{i\phi} x)=\frac {e^{ik\phi}}{k!x} \int_{0}^x (x-t)^{k} \exp{e^{i\phi} t} dt.
\eeq
Taking the absolute value then gives
\beq
\abs{S_{k}(z) - (\exp{z} - 1)z^{-1} } \leq \frac {1}{k!|z|} \int_{0}^{|z|} (|z|-t)^{k} dt = \frac{|z|^k}{(k+1)!} \le \frac 1{(k+1)!}
\eeq
as claimed.
\end{proof}

%%%%%%%%%%%%%%%%%%%%%%%%%%%%%%%%%%%%%%%%%%%%%%%%%%%%%%%%%%%%%%%%%%%%%%%%%%%%%%
%%%%%%%%%%%%%%%%%%%%%%%%%%%%%%%%%%%%%%%%%%%%%%%%%%%%%%%%%%%%%%%%%%%%%%%%%%%%%%
\section{Lemmas About Quantum States}
\label{apd:state}

In this appendix, we prove some technical lemmas about approximation of quantum states.

\begin{lemma}
Let $\ket{\psi}$ and $\ket{\phi}$ be two vectors such that $\nm{\ket{\psi}} \ge \alpha >0$ and $\nm{\ket{\psi}-\ket{\phi}} \le \beta$. Then \beq
\nm{\dfrac{\ket{\psi}}{\nm{\ket{\psi}}}-\dfrac{\ket{\phi}}{\nm{\ket{\phi}}}} \le \dfrac{2\beta}{\alpha}.
\eeq
\label{lem:statedist2}
\end{lemma}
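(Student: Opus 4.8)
The plan is to control the difference of the two normalized vectors by inserting an intermediate term and applying the triangle inequality. Writing $a \defeq \nm{\ket{\psi}}$ and $b \defeq \nm{\ket{\phi}}$, I would add and subtract $\ket{\phi}/a$ to obtain the decomposition
\beq
\frac{\ket{\psi}}{a} - \frac{\ket{\phi}}{b}
= \frac{1}{a}\lb \ket{\psi} - \ket{\phi} \rb
+ \lb \frac{1}{a} - \frac{1}{b} \rb \ket{\phi}.
\eeq
The triangle inequality then reduces the claim to bounding these two pieces separately, the first capturing the numerator error and the second capturing the discrepancy between the two norms.

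For the first term, the hypotheses $\nm{\ket{\psi}-\ket{\phi}} \le \beta$ and $a \ge \alpha$ immediately give $\frac{1}{a}\nm{\ket{\psi}-\ket{\phi}} \le \beta/a \le \beta/\alpha$. For the second term, I would compute
\beq
\abs{\frac{1}{a} - \frac{1}{b}} \nm{\ket{\phi}}
= \frac{\abs{b-a}}{ab}\cdot b
= \frac{\abs{b-a}}{a},
\eeq
and then apply the reverse triangle inequality $\abs{b-a} = \abs{\nm{\ket{\phi}} - \nm{\ket{\psi}}} \le \nm{\ket{\phi}-\ket{\psi}} \le \beta$, so that this term is also at most $\beta/a \le \beta/\alpha$. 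Summing the two bounds yields $2\beta/\alpha$, as required.

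There is no real obstacle here: the argument is a routine application of the triangle inequality and its reverse form. The only point requiring a little care is the choice of the intermediate term $\ket{\phi}/a$, which is engineered so that both resulting pieces collapse cleanly to the common bound $\beta/\alpha$; inserting $\ket{\psi}/b$ instead would give an equally valid split. It is worth noting that the division by $a$ (equivalently by $\alpha$) is legitimate precisely because the hypothesis $\nm{\ket{\psi}} \ge \alpha > 0$ guarantees $a > 0$, and in particular the normalized vectors are well defined.
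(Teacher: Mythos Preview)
Your proposal is correct and follows essentially the same approach as the paper: both insert the intermediate term $\ket{\phi}/\nm{\ket{\psi}}$, apply the triangle inequality, and bound the two resulting pieces by $\beta/\alpha$ using the reverse triangle inequality for the norm-discrepancy term.
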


\begin{proof}
Using the triangle inequality, we get
\begin{align}
\nm{\dfrac{\ket{\psi}}{\nm{\ket{\psi}}}-\dfrac{\ket{\phi}}{\nm{\ket{\phi}}}}
&=
\nm{\dfrac{\ket{\psi}}{\nm{\ket{\psi}}}-\dfrac{\ket{\phi}}{\nm{\ket{\psi}}}+\dfrac{\ket{\phi}}{\nm{\ket{\psi}}}-
\dfrac{\ket{\phi}}{\nm{\ket{\phi}}}} \nonumber\\
&\le
\nm{\dfrac{\ket{\psi}}{\nm{\ket{\psi}}}-\dfrac{\ket{\phi}}{\nm{\ket{\psi}}}} + 
\nm{\dfrac{\ket{\phi}}{\nm{\ket{\psi}}}-
\dfrac{\ket{\phi}}{\nm{\ket{\phi}}}} \nonumber \\
& \le
\dfrac{\nm{\ket{\psi}-\ket{\phi}}}{\nm{\ket{\psi}}} + 
\nm{\ket{\phi}}\abs{\dfrac{1}{\nm{\ket{\psi}}}-\dfrac{1}{\nm{\ket{\phi}}}}\nonumber\\
&=
\dfrac{\nm{\ket{\psi}-\ket{\phi}}}{\nm{\ket{\psi}}} + 
\dfrac{\abs{\nm{\ket{\psi}}-\nm{\ket{\phi}}}}{\nm{\ket{\psi}}}\nonumber\\
&\le
\dfrac{2\nm{\ket{\psi}-\ket{\phi}}}{\nm{\ket{\psi}}}\nonumber\\
&=
\dfrac{2\beta}{\alpha}
\end{align}
as claimed.
\end{proof}

\begin{lemma} Let $| \psi \rangle = \alpha | 0 \rangle | \psi_0 \rangle + \sqrt{1- \alpha^2} | 1 \rangle | \psi_1 \rangle $ and $| \phi \rangle = \beta | 0 \rangle | \phi_0 \rangle + \sqrt{1- \beta^2} | 1 \rangle | \phi_1 \rangle $, where $\ket{\psi_0}$, $\ket{\psi_1}$, $\ket{\phi_0}$, $\ket{\phi_1}$ are unit vectors, and $\alpha, \beta \in [0, 1]$. Suppose $\nm{ | \psi \rangle  - | \phi \rangle } \leq \delta < \alpha $. Then $\nm{ | \phi_0 \rangle - | \psi_0 \rangle } \leq \frac{2 \delta}{\alpha - \delta}$.
\label{lem:statedist}
\end{lemma}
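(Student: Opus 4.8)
The plan is to exploit the block structure of the first register and reduce everything to the $\ket{0}$ subspace. First I would note that both $\ket{\psi}$ and $\ket{\phi}$ are unit vectors, since $\ket{\psi_0},\ket{\psi_1},\ket{\phi_0},\ket{\phi_1}$ are unit vectors and $\alpha^2+(1-\alpha^2)=1$. The key first move is to apply the projector $\ketbra{0}{0}\otimes I$ to the difference $\ket{\psi}-\ket{\phi}$. Since orthogonal projection is norm-contractive, this yields $\nm{\alpha\ket{\psi_0}-\beta\ket{\phi_0}}\le\nm{\ket{\psi}-\ket{\phi}}\le\delta$, isolating the discrepancy in the $\ket{0}$ block.

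Next I would bound the amplitude gap $\abs{\alpha-\beta}$. Because $\ket{\psi_0}$ and $\ket{\phi_0}$ are unit vectors, we have $\alpha=\nm{\alpha\ket{\psi_0}}$ and $\beta=\nm{\beta\ket{\phi_0}}$, so the reverse triangle inequality gives $\abs{\alpha-\beta}\le\nm{\alpha\ket{\psi_0}-\beta\ket{\phi_0}}\le\delta$. In particular $\beta\ge\alpha-\delta>0$, which is exactly where the hypothesis $\delta<\alpha$ enters and which guarantees that the final denominator is positive.

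The remaining step is an algebraic regrouping chosen so that the two error contributions add cleanly. I would write $\beta(\ket{\psi_0}-\ket{\phi_0})=(\alpha\ket{\psi_0}-\beta\ket{\phi_0})-(\alpha-\beta)\ket{\psi_0}$; the point of factoring out $\beta$ rather than $\alpha$ is that the leftover coefficient $(\alpha-\beta)$ then multiplies the \emph{unit} vector $\ket{\psi_0}$, so its contribution to the norm is exactly $\abs{\alpha-\beta}$. Taking norms and applying the triangle inequality gives $\beta\nm{\ket{\psi_0}-\ket{\phi_0}}\le\delta+\abs{\alpha-\beta}\le2\delta$. Dividing by $\beta$ and using $\beta\ge\alpha-\delta$ then produces $\nm{\ket{\psi_0}-\ket{\phi_0}}\le2\delta/(\alpha-\delta)$, as claimed.

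There is no serious analytic obstacle here; the lemma is elementary, and the only thing demanding a little care is the choice of regrouping in the last step. Isolating $\beta$ (so that the residual multiplies $\ket{\psi_0}$) keeps the estimate clean and directly produces the stated denominator $\alpha-\delta$. One could instead factor out $\alpha$ and obtain the marginally stronger bound $2\delta/\alpha$, but the $\beta$-version is the form in which the lemma is actually invoked in the proof of \thm{main}.
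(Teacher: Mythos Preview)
Your proof is correct and follows essentially the same approach as the paper: project onto the $\ket{0}$ block to get $\nm{\alpha\ket{\psi_0}-\beta\ket{\phi_0}}\le\delta$, derive $\beta\ge\alpha-\delta$ via the reverse triangle inequality, and then decompose $\beta(\ket{\psi_0}-\ket{\phi_0})=(\alpha\ket{\psi_0}-\beta\ket{\phi_0})-(\alpha-\beta)\ket{\psi_0}$. The paper carries out the identical regrouping in the notation $\braket{0}{\psi}$, $\braket{0}{\phi}$ rather than with the explicit amplitudes, but the two arguments are line-for-line the same.
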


\begin{proof}
First note that $\nm{ | \psi \rangle  - | \phi \rangle } \leq \delta$ implies
\begin{equation}
  \nm{ \langle 0 | \psi \rangle  - \langle 0 | \phi \rangle } \leq \delta  . \label{inequ1}
\end{equation}
In addition, by the triangle inequality, we have
\begin{equation}
\nm{ \langle 0 | \phi \rangle }  \geq  \nm{ \langle 0 | \psi \rangle } - \nm{ \langle 0 | \psi \rangle  - \langle 0 | \phi \rangle } \geq \alpha - \delta . \label{inequ2}
\end{equation}
Then Eqs.~(\ref{inequ1}) and (\ref{inequ2}) imply
\begin{equation}
\displaystyle\frac{ \nm{ \langle 0 | \psi \rangle  - \langle 0 | \phi \rangle } }{ \nm{ \langle 0 | \phi \rangle } } \leq \frac{\delta}{\alpha - \delta}. \label{inequ3}
\end{equation}
Then by the triangle inequality, we get
\begin{align}
\nm{ | \phi_0 \rangle - | \psi_0 \rangle }
& = \left\lVert \displaystyle\frac{ \langle 0 | \phi \rangle }{\nm{\langle 0 | \phi \rangle} } - \frac{ \langle 0 | \psi \rangle }{\nm{\langle 0 | \psi \rangle} } \right\rVert \nonumber\\
& = \left\lVert \displaystyle\frac{ \langle 0 | \phi \rangle}{\nm{\langle 0 | \phi \rangle} } - \dfrac{\langle 0 | \psi \rangle }{\nm{\langle 0 | \phi \rangle} } + \frac{ \langle 0 | \psi \rangle }{\nm{\langle 0 | \phi \rangle} } - \frac{ \langle 0 | \psi \rangle }{\nm{\langle 0 | \psi \rangle} } \right\rVert\nonumber \\
& = \left\lVert \displaystyle\frac{ \langle 0 | \phi \rangle -\langle 0 | \psi \rangle }{\nm{\langle 0 | \phi \rangle} } +  \langle 0 | \psi \rangle \frac{\nm{\langle 0 | \psi \rangle} - \nm{\langle 0 | \phi \rangle}  }{\nm{\langle 0 | \phi \rangle} \nm{\langle 0 | \psi \rangle}} \right\rVert \nonumber\\
& \leq \displaystyle\frac{ \nm{\langle 0 | \phi \rangle -\langle 0 | \psi \rangle} }{\nm{\langle 0 | \phi \rangle} }  +  \abs{ \frac{\nm{\langle 0 | \psi \rangle} - \nm{\langle 0 | \phi \rangle}  }{\nm{\langle 0 | \phi \rangle}} } \nonumber\\
& \leq \displaystyle\frac{ \nm{\langle 0 | \phi \rangle -\langle 0 | \psi \rangle} }{\nm{\langle 0 | \phi \rangle} }  +  \frac{\nm{\langle 0 | \psi \rangle - \langle 0 | \phi \rangle}  }{\nm{\langle 0 | \phi \rangle}}\nonumber \\
& \leq  \frac{\delta}{\alpha - \delta} +  \frac{\delta}{\alpha - \delta}\nonumber \\
&= \frac{2 \delta}{\alpha - \delta}
 \label{inequ4}
\end{align}
as claimed.
\end{proof}

\begin{lemma} Let $| \psi \rangle = \alpha | 0 \rangle | \psi_0 \rangle + \sqrt{1- \alpha^2} | 1 \rangle | \psi_1 \rangle $ and $| \phi \rangle = \beta | 0 \rangle | \phi_0 \rangle + \sqrt{1- \beta^2} | 1 \rangle | \phi_1 \rangle $, where $\ket{\psi_0}$, $\ket{\psi_1}$, $\ket{\phi_0}$, $\ket{\phi_1}$ are unit vectors, and $\alpha, \beta \in [0, 1]$. Suppose $\nm{ | \psi \rangle  - | \phi \rangle } \leq \delta < \alpha $. Then $\beta \geq \alpha - \delta$. 
\label{lem:probdist}
\end{lemma}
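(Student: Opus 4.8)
The plan is to recognize $\beta$ as the norm of the $\ket{0}$-block of $\ket{\phi}$ and then obtain the bound directly from the reverse triangle inequality, using that projecting onto a block of the first register is a contraction. Since $\ket{\psi_0}$ and $\ket{\phi_0}$ are unit vectors, contracting the first register against $\bra{0}$ gives $\langle 0 | \psi \rangle = \alpha \ket{\psi_0}$ and $\langle 0 | \phi \rangle = \beta \ket{\phi_0}$, so that $\nm{\langle 0 | \psi \rangle} = \alpha$ and $\nm{\langle 0 | \phi \rangle} = \beta$. This identifies the two amplitudes with the lengths of two concrete vectors, which is all the structure I will need.

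Next I would invoke the hypothesis $\nm{\ket{\psi} - \ket{\phi}} \le \delta$. The map $\ket{\chi} \mapsto \langle 0 | \chi \rangle$ cannot increase norms, so
\beq
\nm{\langle 0 | \psi \rangle - \langle 0 | \phi \rangle} \le \nm{\ket{\psi} - \ket{\phi}} \le \delta .
\eeq
This is precisely the estimate already recorded as Eq.~(\ref{inequ1}) in the proof of \lem{statedist}, so I would simply reuse it rather than re-deriving it. Combining this with the reverse triangle inequality yields
\beq
\beta = \nm{\langle 0 | \phi \rangle} \ge \nm{\langle 0 | \psi \rangle} - \nm{\langle 0 | \psi \rangle - \langle 0 | \phi \rangle} \ge \alpha - \delta ,
\eeq
which is the assertion. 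In fact this chain coincides with Eq.~(\ref{inequ2}) from the same proof, so the lemma is an immediate byproduct of work already done.

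I expect no genuine obstacle: the whole argument reduces to the contraction property of the block projection together with the reverse triangle inequality. The one point worth stating explicitly is the identification $\alpha = \nm{\langle 0 | \psi \rangle}$ and $\beta = \nm{\langle 0 | \phi \rangle}$, which hinges on $\ket{\psi_0}$ and $\ket{\phi_0}$ being genuine unit vectors; since this is assumed in the hypotheses, nothing further is required. Note also that the condition $\delta < \alpha$ is not actually needed for this particular bound (if $\delta \ge \alpha$ the claim is trivial because $\alpha - \delta \le 0 \le \beta$); it is imposed only so that $\alpha - \delta > 0$, matching the companion \lem{statedist}.
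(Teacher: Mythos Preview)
Your proposal is correct and follows essentially the same argument as the paper's own proof: project onto the $\ket{0}$-block to pass from $\nm{\ket{\psi}-\ket{\phi}}\le\delta$ to $\nm{\braket{0}{\psi}-\braket{0}{\phi}}\le\delta$, then apply the reverse triangle inequality to get $\beta=\nm{\braket{0}{\phi}}\ge\nm{\braket{0}{\psi}}-\delta=\alpha-\delta$. Your extra remarks (that this duplicates Eqs.~(\ref{inequ1})--(\ref{inequ2}) and that the hypothesis $\delta<\alpha$ is not actually needed for this particular bound) are accurate observations but do not change the substance of the argument.
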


\begin{proof}
Note that $\nm{ | \psi \rangle  - | \phi \rangle } \leq \delta$ implies
$\nm{\braket{0}{\psi} - \braket{0}{\phi})} \le \delta$. Then by the triangle inequality, we get
\begin{align}
\beta &= \nm{\braket{0}{\phi}} \nonumber\\
&= \nm{\braket{0}{\psi} - \braket{0}{\psi} + \braket{0}{\phi}} \nonumber\\
& \ge \nm{\braket{0}{\psi}} - \nm{\braket{0}{\psi} - \braket{0}{\phi}} \nonumber\\
& \ge \alpha - \delta
\end{align}
as claimed.
\end{proof}

\end{appendix}

\end{document}